\def\BibTeX{{\rm B\kern-.05em{\sc i\kern-.025em b}\kern-.08em
		T\kern-.1667em\lower.7ex\hbox{E}\kern-.125emX}}
\newtheorem{definition}{Definition}
\newtheorem{exam}{Example}
\newtheorem {strategy}{Strategy}
\newtheorem{theorem}{Theorem}
\newcommand{\stitle}[1]{\vspace{1ex} \noindent{\bf #1}}
\begin{document}
	\SetKwProg{When}{when}{ do}{}
	\SetKwProg{Function}{function}{ }{}
	
	\title{Concurrency Protocol Aiming at High Performance of Execution and Replay for Smart Contracts}

	\author{\IEEEauthorblockN{Shuaifeng Pang, \,\, Xiaodong Qi, \,\, Zhao Zhang,\,\, Cheqing Jin,\,\, Aoying Zhou}
		\IEEEauthorblockA{\textit{School of Data Science and Engineering} \\
		\textit{East China Normal University}\\
		Shanghai, China \\
		\{sfpang, xdqi\}@stu.ecnu.edu.cn, \{zhzhang, cqjin, ayzhou\}@dase.ecnu.edu.cn}
	}

	\maketitle
	
	\begin{abstract}
		Although the emergence of the programmable smart contract makes blockchain systems easily embrace a wider range of industrial areas, how to execute smart contracts efficiently becomes a big challenge nowadays. Due to the existence of Byzantine nodes, the mechanism of executing smart contracts is quite different from that in database systems, so that existing successful concurrency control protocols in database systems cannot be employed directly. Moreover, even though smart contract execution follows a two-phase style, i.e, the miner node executes a batch of smart contracts in the first phase and the validators replay them in the second phase, existing parallel solutions only focus on the optimization in the first phase, but not including the second phase. 
		
		In this paper, we propose a novel efficient concurrency control scheme which is the first one to do optimization in both phases. Specifically, (i) in the first phase, we give a variant of OCC (Optimistic Concurrency Control) protocol based on {\em batching} feature to improve the concurrent execution efficiency for the miner and produce a schedule log with high parallelism for validators. Also, a graph partition algorithm is devised to divide the original schedule log into small pieces and further reduce the communication cost; and (ii) in the second phase, we give a deterministic OCC protocol to replay all smart contracts efficiently on multi-core validators where all cores can replay smart contracts independently. Theoretical analysis and extensive experimental results illustrate that the proposed scheme outperforms state-of-art solutions significantly.
		
	\end{abstract}

	\begin{IEEEkeywords}
		Blockchain, Smart Contract, Concurrency
	\end{IEEEkeywords}
	
	\section{Introduction} \label{sec:intro}

As a kind of distributed ledger shared by many non-trusted parties, blockchain technology, such as Bitcoin, Ethereum\cite{ethereum} and Hyperledger Fabric\cite{hyperledgerfabric}, has gained lots of attention and interest from public and academic communities. Programmable smart contracts, defining multiple functions to describe any business logic, promote utilization of blockchain technology for the traditional industry. 
The notion of smart contracts was conceived by Nick Szabo\cite{szabo1997formalizing} as a kind of digital vending machine in 1993. Nowadays, smart contracts can be written in several high-level programming languages, such as Solidity in Ethereum and Go in Hyperledger Fabric\cite{hyperledgerfabric}. Generally, a smart contract is invoked by a blockchain client via a transaction, along with appropriate parameters. In other words, the term \textit{transaction} is referred to as an event in which a specific smart contract is invoked.
Note that this kind of transaction also satisfies ACID (Atomic, Consistency, Isolation, and Durability) properties, like those supported in database systems\cite{dinh2018untangling}. 

\begin{figure}[hp]
	\begin{center}
		\vspace{-0.2cm}  
		\setlength{\abovecaptionskip}{0.1cm}   
		\setlength{\belowcaptionskip}{-0.4cm}   
		\includegraphics[scale=0.74]{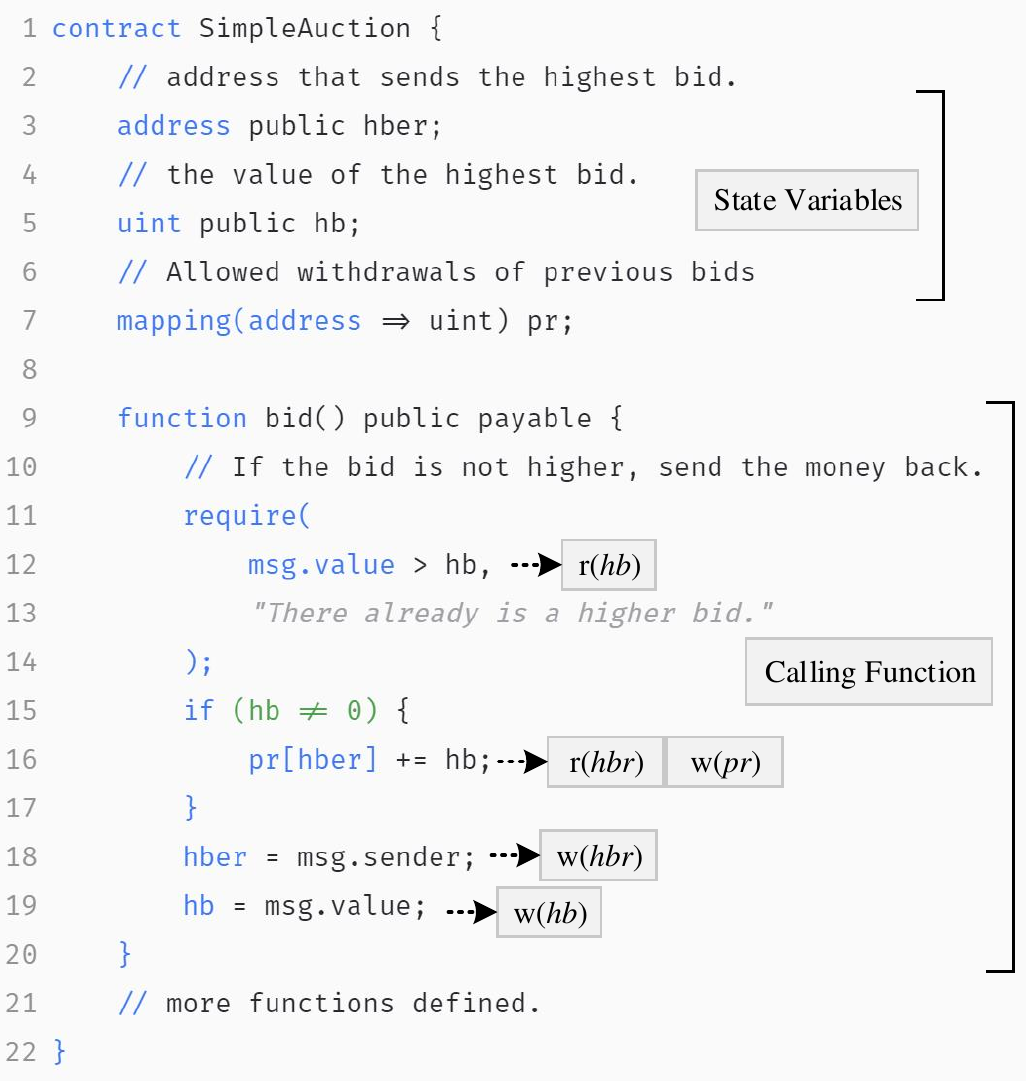}
		\caption{Open auction smart contract}
		\label{img:auction}
	\end{center}
\end{figure}

For illustration purpose, Figure \ref{img:auction} uses a simple public auction smart contract, written in Solidity, to describe a scenario where anyone can send her bid during the bidding period. Initially, three state variables are declared: address that sends the highest bid ($hber$), the highest bid ($hb$), and $pr$ that binds the bidders to their bids for refunding purpose (lines 2-7). Subsequently, function $bid()$ is declared to store the highest bidder and her bid (lines 9-20). Every bidder calls $bid()$ to submit her bid by sending a transaction to contract address. Once the highest bid rises, the former highest bidder gets her money back by calling a withdraw function (not listed in this paper due to lack of sufficient space). The \textit{msg} variable stores the sender's information such as address and balance.

\subsection{Characteristics and Challenges}

Although simple, executing smart contracts in a serial manner is exactly inefficient. 
Hence, it is critical to devise concurrent execution protocol to pursuit better performance. However, the concurrency control protocols used in traditional databases cannot be  applied to blockchain directly due to the following reasons.
\vspace{-0.1cm}
\begin{itemize}
	\setlength{\itemsep}{0.5pt}
	\setlength{\parsep}{1pt}
	\setlength{\parskip}{0cm}
	\item \textbf{Byzantine fault tolerance}. Note that the \textsf{fail-stop} assumption no longer holds in a Byzantine environment. Since Byzantine (i.e., arbitrary) nodes in blockchain systems may send false messages to other nodes and do malicious acts, the smart contract transactions need to be re-executed at all nodes rather than accepting execution results from miner directly.
	\item \textbf{Full replication data distribution}. As blockchain system adopts full replication mechanism, i.e., each node holds a complete copy of data, the concurrency control protocol in blockchain system should ensure the execution results on each replica deterministic, which makes the concurrency control protocols in traditional database system that merely guarantee serializability insufficient.
	\item \textbf{Turing-complete programming language}. 
	Smart contracts is often written in a Turing-complete language, which means conflict relationships cannot be determined till run-time.
	\item \textbf{Batching process}. A significant property of blockchain systems is that transactions are arriving in a batch style, i.e., no transaction will be processed until the block is fulfilled. The latency of batching in blockchain can be negligible compared to that in batching database systems which tends to take extra time to collect a batch of transactions. Moreover, existing batching techniques are not suitable for blockchain because they always strike a balance between throughput and latency, which is not the focus of blockchain systems.
\end{itemize}

\vspace{-0.1cm}


Hence, it is challenging to devise efficient solutions to execute smart contracts in parallel.
Since smart contract transactions must be executed by all nodes to keep state data consistent, a so-called {\em two-phase} framework is commonly adopted for concurrency control of smart contracts, as shown in Figure \ref{img:framework} \cite{anjana2018efficient,dickerson2017adding,zhang2018enabling}.
During the first phase (mining phase), miner executes transactions in parallel, and then transfers concurrent schedule log to validators. During the second phase (validation phase), each validator replays all transactions deterministically and verify whether miner is malicious or not.

Dickerson et al. proposed a solution in which the miner concurrently executes transactions using abstract locks and inverse logs to discover a serializable schedule \cite{dickerson2017adding}. Schedule logs are represented by a directed acyclic graph (\textit{happen-before} graph) to help validator recognize transactions without conflicts and execute them concurrently with fork-join\cite{lea2000java} method. Such transaction-level schedule logs make replay quite inefficient in validators.
And the abstract lock in this concurrency control is pessimistic in nature with poor scalability. Anjana et al. replaced pessimistic lock with OCC (Optimistic Concurrency Control), a cheap protocol that scales up well for low-conflict workloads \cite{anjana2018efficient}. However, once the workload has high read-write conflicts, OCC will cause high abort rate which strongly limits throughput. Zhang et al. presented a fine-grained concurrency control for validators by recording the write set of every transaction, which makes all contention relationships pre-determined \cite{zhang2018enabling}. Because the proposed mechanism, called \textit{MVTO}, uses write chain to resolve conflicts at run-time, the communication overhead and storage consumption brought by write sets further reduce the overall throughput. Moreover, all existing work ignores the possibility that the schedule log is tampered with by miner and relies on the default verification mechanism of blockchain systems.

\begin{figure}[!hbt]
	\begin{center}
		\includegraphics[scale=0.39]{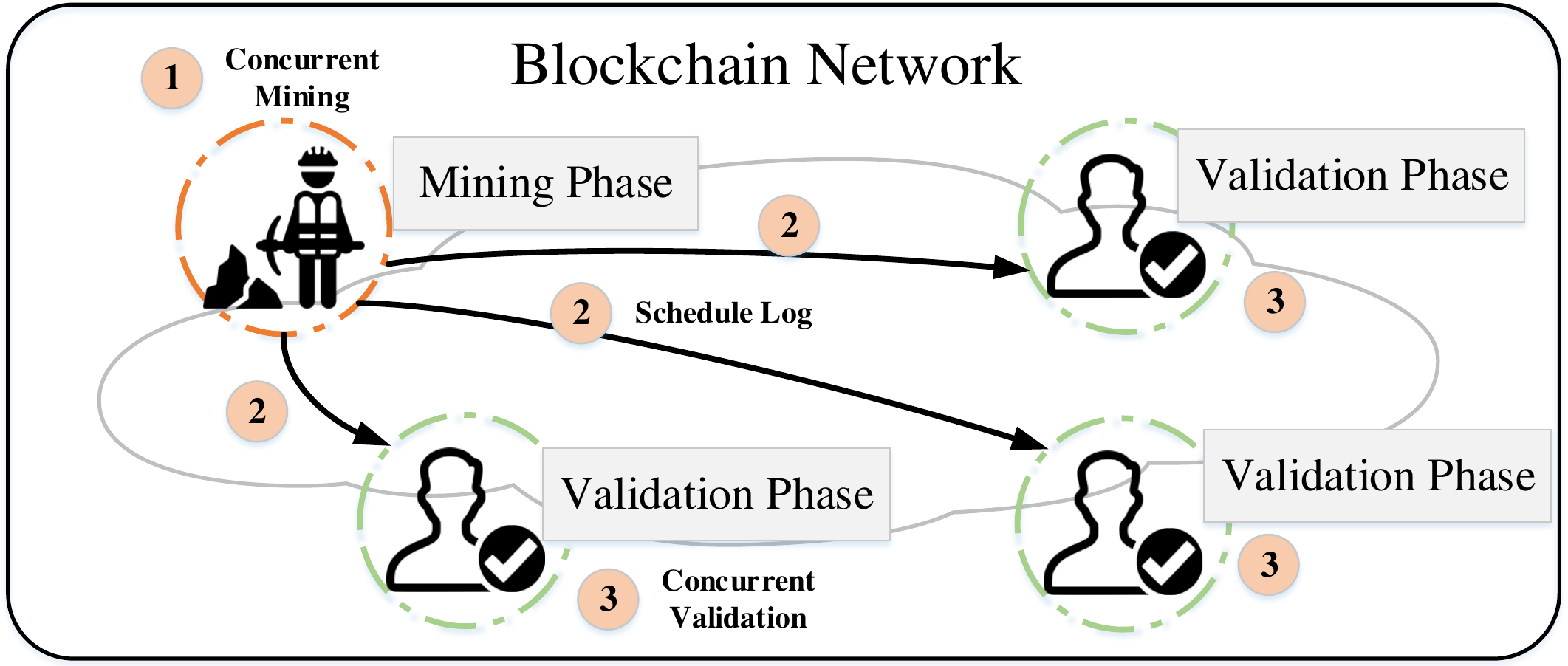}
		\caption{Two-phase concurrent execution framework.}
		\label{img:framework}
	\end{center}
\end{figure}

To sum up, existing work has following three drawbacks: (i) Miner tends to adopt mature and proven concurrency control protocols, like two-phase locking (2PL) and OCC in the database area, which only benefits miner itself. (ii) The granularity of concurrent schedule log is either too coarse, or too fine, which affects communication overhead and replay throughput deeply. (iii) The verification mechanism of validators depends on the default setting of blockchain systems, which checks the state Merkle-root after completing the state transition. This prevents a malicious miner being detected rapidly during the execution. 

In view of the above three limitations, we propose a novel concurrency control protocol for smart contract execution, aiming at boosting performance for miner and improving replay efficiency for validators. Specifically, it includes three aspects: First, a variant of OCC combined with transaction batching is proposed for the miner, where aborted transactions are carefully picked in validation phase of OCC; Second, an appropriate granularity of schedule log is determined, because coarse-grained schedule log like \textit{happen-before} graph\cite{dickerson2017adding} causes low throughput in validation phase, and fine-grained one such as \textit{write chain}\cite{zhang2018enabling} results in massive communication cost. Third, our deterministic concurrent replay scheme based on medium-grained schedule log allows validators to replay the same schedule in a concurrent and deterministic manner.

\subsection{Contributions}
More specifically, we claim the following contributions in this study:
\begin{itemize}
	\setlength{\itemsep}{0pt}
	\setlength{\parsep}{1pt}
	\setlength{\parskip}{0pt}
	\item Design an effective OCC variant according to transaction batching feature for miner, which can provide higher parallelism for miner and faster replaying speed for validators at the same time. In other words, we take the optimization of replaying performance of validators into consideration from the beginning (mining phase).
	
	\item Devise medium-grained concurrent schedule log to validators. Even though our proposed \textit{partitioned transaction dependency graph} maintains high concurrent degree, the communication cost is cheap.
	
	\item Propose a deterministic and conflict-free concurrency control protocol for validators. We also design a method that embeds verification scheme into the proposed protocol to quickly detect malicious tampering.
	
	
	\item A prototype implemented in Java. We integrate the above techniques into this prototype and measure the system performance under a standard benchmark. Experimental results show the superiority of the proposed methods.

\end{itemize}

\subsection{Organization} 
The rest of paper is organized as follows. Section \ref{sec:related} reviews some related work about latest approaches applied to concurrent smart contract execution and other concerned techniques. Section \ref{sec:problem} formally defines the problems in this paper. In Section \ref{sec:framework}, we explain our parallel two-phase execution scheme in detail. Section \ref{sec:analysis} analyzes the overall cost of the proposed protocols. The experimental evaluations are reported in section \ref{sec:evaluation}. Section \ref{sec:conclusion} concludes this paper.

%
\section{Related Work}\label{sec:related}
We review recent researches close to our work in this section.


\stitle{Concurrency control protocol}. Concurrency control in DBMS has been actively studied for more than 30 years. Generally these works are grouped into two kinds, one is pessimistic, and the other is optimistic. The most frequently used pessimistic scheme to ensure serializability is two-phase locking (2PL) \cite{eswaran1976notions}. In contrast to the pessimistic lock-based protocol, Kung and Robinson propose a validation-based, non-locking optimistic concurrency control scheme, or in short OCC \cite{kung1981optimistic}. As a validation based protocol, every transaction goes through three phases: first comes a read phase where the transaction reads data item directly from storage and writes to a private location. Then the transaction enters validation phase. If it passes validation, the transaction writes back its updates. Wang et al. propose an adaptive CC which combines with both OCC and lock \cite{wang2016mostly}.

\stitle{Concurrent smart contract execution}. Smart contract which is sequential programs stored on blockchain can be triggered by transaction sent by clients. While a pile of works try to improve the performance of blockchain system on consensus layer, there exist some works on adding concurrency to smart contract execution. Dickerson et al. present a solution to permit miner and validators to execute smart contracts concurrently \cite{dickerson2017adding}. Every smart contract invocation can be treated as a speculative action so that miner can discover a serializable schedule using lock-based STM and publish it to the blockchain. Validators who receive a new block can replay the same execution deterministically. Zhang et al. propose a method which can employ any concurrency control mechanism that produces a conflict-serializable schedule in mining phase \cite{zhang2018enabling}. Validators use MVTO protocol with the help of write sets provided by miner to re-execute transactions. Anjana et al. replace the pessimistic protocol with OCC and propose a decentralized way in validation phase \cite{anjana2018efficient}. Sergey and Hobor explore similarity between multi-transactional behaviors of smart contracts in Ethereum and shared-memory concurrency problem \cite{sergey2017concurrent}. These approaches consider the problem separately while ours takes the overall interest of miner and validators into account from the beginning.

\stitle{Batching and Determinism}.
Batching processing is commonly used to improve performance. Ding and Kot utilize transaction batching and reordering techniques to improve OCC \cite{ding2018improving}. Santos et al. apply batching technique to boost throughput of Paxos \cite{santos2012tuning}. Batching is also a fundamental feature of blockchain systems. We combine this feature with OCC to reduce abort rate.
Determinism is also a concerned topic in concurrent execution. Bocchino et al. argue that parallel programming must be deterministic by default\cite{bocchino2009parallel}. And also several approaches are brought up by Bocchino. Vale et al. present a deterministic transaction execution system in the context of Transaction Memory (TM) \cite{vale2016pot}.
Recall that the working style between blockchain systems and database systems is significantly different, so that it is necessary to devise novel solutions.
	\section{Problem Statement}\label{sec:problem}

In this section, we will formalize the key issues in this study. Table \ref{tab:symbol} lists the notation used throughout this paper. 

\begin{table}[htbp]
	\caption{The notation used in this paper}
	\label{tab:symbol}
	\centering
	{\small 
		\begin{tabular}{c|l|c|l}
			\hline
			Symbol & Description & Symbol & Description\\
			\hline
			$G$ & garph & $\Pi$ & parallelism of graph \\
			$|V|$ & \# of vertices & $\rho$ & commit ratio \\
			$|E|$ & \# of edges & $\tau$ & workload threshold \\
			$\omega(v)$ & weight of $v$ & $B$ & a batch of txs \\
			$R_j^i$ & consistent read set & $RS(T_i)$ & read set of $T_i$ \\
			$P$ & a partition of graph & $WS(T_i)$ & write set of $T_i$\\
			$D$ & density of graph & $\mathcal{O}$ & serialization order \\
			$c(e)$ & communication cost & & \\
			\hline
		\end{tabular}
	}
\end{table}

As in database system, multiple smart contract transactions running concurrently can cause data races leading to inconsistent final state in the blockchain.         
Hence, data conflicts need to be resolved during run-time to ensure consistency.
Serialization graph or conflict graph($CG$)\cite{adya1999weak}\cite{cahill2008serializable} has long been adopted in concurrency control to captures conflict relationship among concurrent transactions, in which vertices are smart contract transactions, and edges represent \textit{read-write} conflict dependencies between smart contract transactions. Note that \textit{write-write} conflict dependencies need not to be tracked in $CG$ because each transaction maintains its own write set in OCC protocol.     
For simplicity, smart contract transaction is abbreviated as transaction hereafter.

\begin{definition}[\small{Conflict Graph, CG}] \label{def:cg}
	A CG is a directed graph $G=(V, E)$, where $V=\{T_1, T_2, T_3, \ldots, T_n\}$, $E=\{(T_i, T_j)|i \neq j$\}. We say there is a read-write conflict edge from $T_i$ to $T_j$ if $RS(T_i) \cap WS(T_j) \neq \emptyset$ holds where $RS(T)$ and $WS(T)$ denote read set and write set respectively.
\end{definition}

Suppose three concurrent transactions calling function \textit{bid} of smart contract shown in Figure \ref{img:auction} are abstracted as below. Both $T_1$ and $T_2$ manage to raise the highest bid. $T_3$ sends a bid with the value no greater than the current highest bid ($hb$), so it returns directly.
\[
\begin{array}{l}
T_1: \textsf{r}(hb)\textsf{r}(hber)\textsf{w}(pr)\textsf{w}(hber)\textsf{w}(hb) \\ \vspace{0.1cm}
T_2: \textsf{r}(hb)\textsf{r}(hber)\textsf{w}(pr)\textsf{w}(hber)\textsf{w}(hb) \\ \vspace{0.1cm}
T_3: \textsf{r}(hb)
\end{array}
\]
By the above definition, we check read sets and write sets between any two transactions and construct a conflict graph shown in Figure \ref{img:graph:a}. As usual in traditional concurrency control theory, the absence of a cycle in $CG$ proves that the schedule is serializable. If $CG$ is acyclic, a serialization order $\mathcal{O}$ can be acquired by repeatedly committing a transaction without any edge under a topological order. Otherwise, we need to abort several transactions to make the graph acyclic, i.e., no dependency relationship among all the remaining transactions. 

However, $CG$ cannot capture dependencies among those rollback transactions and commit transactions. So, we define a transaction dependency graph ($TDG$) to represent the final schedule of a batch transactions $B$. $TDG$ is generated on the basis of a serializable $CG$ by adding vertices and updating edges when committing transactions according to the serialization order $\mathcal{O}$. 


\begin{definition}[\small{Transaction Dependency Graph, $TDG$}] \label{def:tdg}
	A $TDG$ is a DAG (directed acyclic graph) $G=(V, E)$, where $V=\{T_1, T_2, T_3, \ldots, T_n\}$, $E=\{(T_i,T_j,R_j^i)|1 \leq i \neq j \leq n\}$ and $R_j^i$ records all values that $T_j$ reads from $T_i$. 
\end{definition}

An example $TDG$ is presented in Figure \ref{img:graph:b}. After aborting transaction $T_2$ in the example $CG$, we can commit the remaining two transactions in a serialization order ($T_1 \rightarrow T_3$). If re-executing $T_2$, a dependency edge from $T_1$ to $T_2$ is included in the final $TDG$ with a consistent read set $R_2^1=\{hb, hber\}$. Edges of a $TDG$ indicate \textit{read-from} relationships, i.e., the data transferred from one transaction to another. The weight of each vertex $T_i$ is defined as the execution time and is denoted as $\omega(T_i)$. Hereafter, we use the terms vertex and transaction interchangeably. Edges in $TDG$ correspond to both the precedence constrains and communication message containing \textit{consistent read sets} among vertices. The weight of an edge $e=(T_i,T_j)$ denoted as $c(e)$ indicates the communication overhead (by the byte size of $R_j^i$). 

\begin{figure}[htbp]
	\centering
	\vspace{-0.1cm}  
	\setlength{\abovecaptionskip}{0.2cm}   
	\setlength{\belowcaptionskip}{-0.1cm}   
	\subfigure[$CG$]{
		\label{img:graph:a}
		\includegraphics[scale=0.53]{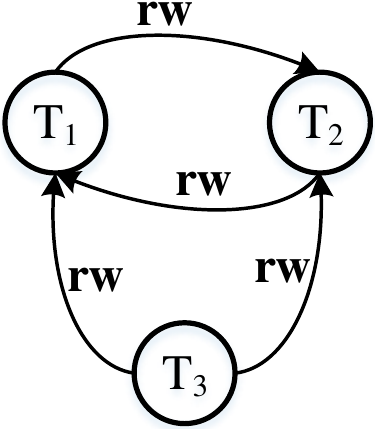}
	}
	\hspace{2cm}
	\subfigure[$TDG$]{
		\label{img:graph:b}
		\includegraphics[scale=0.53]{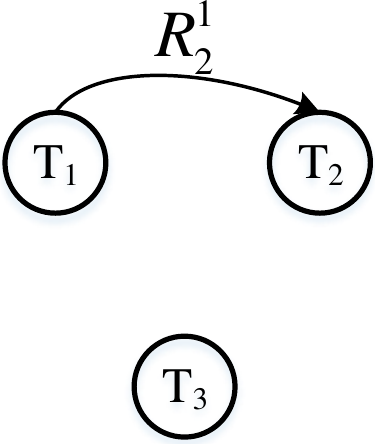}
	}
	\caption{An example of $CG$ and $TDG$.}
	\label{img:graph}
\end{figure}

As mentioned before, considering the conflict is low in blockchain system, we design our concurrency control protocol based on OCC protocol. However, the native OCC dose not guarantee a serializable schedule. Transactions that cause the violation of serializability need to be aborted to obtain an acyclic $CG$ during the validation phase of OCC. Furthermore, a careful selection of aborted transactions is required to generate a higher concurrent schedule.


Deleting vertices from a conflict graph to make it cycle-free is actually a classic feedback vertex set problem (FVS)\cite{chen2008improved}. A feedback vertex set of a directed graph is a subset of vertices whose removal makes the graph acyclic. For example, consider the $CG$ in Figure \ref{img:graph:a}. Vertex $T_3$ forms a FVS since the removal frees the graph of cycle. Once the size of the feedback vertex set becomes too large, it will downgrade the overall throughput. So we further define a commit ratio $\rho$ to represent the percentage of successfully committed transactions in $B$. Making graph acyclic with a minimal-size vertex set is the basic requirement for FVS problem. We can consider a more complex objective function which is to make the transformed $TDG$ with higher parallelism for the sake of high replay speed in validators. 

A graph with fewer edges owns higher parallelism due to less inter-conflict between transactions. Let $\Pi$ denote the sparsity of $TDG$ which indicates the concurrent degree validators can obtain during replaying, where $D$ is the density of the graph itself. The density of a graph is the ratio of edges in the graph to the maximum possible number of edges. Equation \ref{equ:density} defines $\Pi$ formally.  

\begin{equation} \label{equ:density}
\setlength{\abovedisplayskip}{-6pt}
\setlength{\belowdisplayskip}{0pt}
\Pi=1-D=1-\frac{|E|}{|V|(|V|-1)}
\end{equation}

Based on sparsity definition of $\Pi$ in Equation \ref{equ:density}, generating $TDG$ with higher sparsity and parallelism is transformed into a classic FVS problem defined by definition \ref{def:problem1}. Note that a commit ratio $\rho$ need to be set to prevent aborting overmuch transactions from happening.


\begin{definition}[FVS Problem] \label{def:problem1} 
	Given a batch of $n$ transactions $B=\{T_1,T_2,\ldots,T_n\}$, find a subset $B' \in B$, aborting $B'$ not only makes a serializable $CG$, but also minimizes the density of transformed $TDG$ with at least $\rho$ commit ratio.
\end{definition}


Generating a medium-grained schedule log represented by $TDG$ is reasonable. There are two specific reasons: (1) The theoretical concurrent degree is hard to reach due to the limited physical cores. (2) Offering all consistent read set significantly increases network overhead between miner and validators. Therefore, we come up with a partitioning way on $TDG$ which remains the parallelism as much as possible but brings much smaller communication cost.


A $\tau$-constrained partitioning of a $TDG$, $G=(V,E)$, divides $V$ into multiple disjoint subsets $\{V_1,V_2,...\}$ with the workload of each sub-graph no more than a threshold $\tau$. The weight of a sub-graph $\omega(V_i)=\sum_{T \in V_i}\omega(T)$ which is the sum of the vertex weight or transaction execution time. An edge $(T_i,T_j,R_j^i)$ is called \textit{cut edge} if $T_i \in V_p,T_j \in V_q$ and $p \neq q$. The weight of all \textit{cut edges} $c(E_c)=\sum_{e \in E_c}c(e)$. Definition \ref{def:problem2} gives a constrain on each sub-graph and accompanies the problem with an objective function based on the weights of the cut edges.

\begin{definition}[Graph partition problem] \label{def:problem2}
	Given a transaction dependency graph $G$, and an upper bound of weight $\tau$, is there a $\tau$-constrained partition $P=\{V_1,V_2,...\}$ such that $\omega(V_i)$ is no larger than $\tau$ and $c(E_c)$ is minimized?
\end{definition}


%
\section{Execution Framework}\label{sec:framework}
The two-phase execution framework is the most appropriate approach to keep state consistency among all replicas in a Byzantine environment. 
A transaction will be executed twice throughout its life cycle, once in miner and again in validators in this two-phase framework.     
The existing research work considers the concurrent problem of two phases separately, where the miner adopts the mature and proven concurrency control protocol neglecting the replay efficiency in validators.

Regarding this drawback, we propose a novel concurrency control protocol taking the optimization of two phases into account at the same time. In the first phase, we design a variant of OCC protocol utilizing the characteristic of batching and produce a medium-grained schedule log which is conducive to improve replay efficiency in validators and also reduces the communication cost. 
When receiving blocks that contain the concurrent schedule log, validators enter the second execution phase and complete an efficient replay with the help of a deterministic protocol and this schedule log. 

To be more specific, our approach is guided by three goals, including designing an efficient concurrency control protocol suitable for blockchain, determining the appropriate granularity for the schedule log, and devising a deterministic and efficient replay protocol based on the scheduling log sent by the miner for validators.

We design a variant of OCC protocol based on the natural batching feature of blockchain systems to fulfill the first goal, aiming at boosting performance for miner and improving replay efficiency for validators. And seeking a solution to get the first goal is exactly the same process of solving the FVS problem illustrated by definition \ref{def:problem1} in Section \ref{sec:problem}. More details about batching concurrency control are described in Section \ref{sec:framework:mining}.

Given that the parallelism is surplus when offering every transaction a consistent read set due to the limitation of physical cores, a practical approach is devised to lessen the communication cost between the miner and validators while remaining maximum concurrent degree during replaying in validators. The problem behind the second goal is a graph partition problem defined by Definition \ref{def:problem2} in Section \ref{sec:problem} 
Section \ref{sec:framework:mining} details the partition algorithm.


As for the third goal, we propose a deterministic OCC protocol benefiting from \textbf{partitioned TDG} to keep consistency among replicas and also improve CPU utilization. We describe the concurrent validator scheme in Section \ref{sec:validation}.


\subsection{Concurrency Control Protocol in Mining Phase}\label{sec:framework:mining}

\stitle{Batching and reordering  transactions in OCC's validation phase} 

As a processing unit in blockchain systems, each block is basically a batch of transactions. We take advantage of this natural feature and apply it to OCC protocol. As transactions in original OCC commit randomly, and the final serialization order is only decided at commit time during the validation phase of OCC, we utilize batching technique, with which the protocol waits until all transactions finishing their read phase and selects an optimal validation order , to reduce the number of conflicts and aborts. Moreover, reordering can further improve the throughput of the miner.

Figure \ref{img:reorder} gives a simple transaction reordering example. Suppose two transactions $T_1$ and $T_2$ in a batch are executed concurrently. Figure \ref{img:reorder:a} shows that $T_1$ first reads $x$ before $T_2$ writes a new version of $x$. If without batching, $T_2$ can successfully commit while $T_1$ fails the validation and leads to an abort as Figure \ref{img:reorder:b} demonstrates. But with batching and reordering, $T_2$ can be serialized after $T_1$ (Figure \ref{img:reorder:c}). Thus, two transactions can both commit its own changes without abort.

\begin{figure}[htbp]
	\centering
	\subfigure[Two Txs]{
		\label{img:reorder:a}
		\includegraphics[scale=0.53]{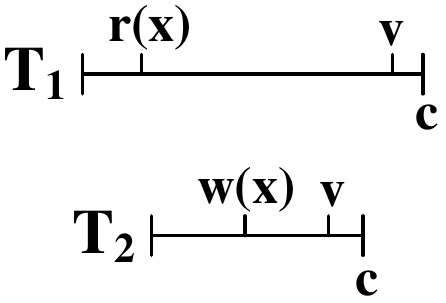}
	}
	\hfill
	\subfigure[Original OCC]{
		\label{img:reorder:b}
		\includegraphics[scale=0.53]{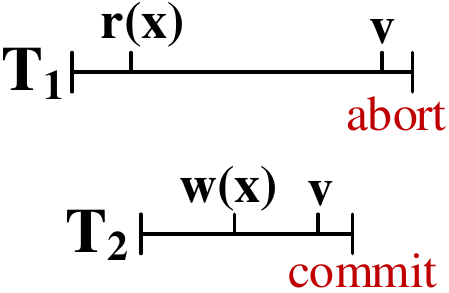}
	}
	\hfill
	\subfigure[Reordering]{
		\label{img:reorder:c}
		\includegraphics[scale=0.53]{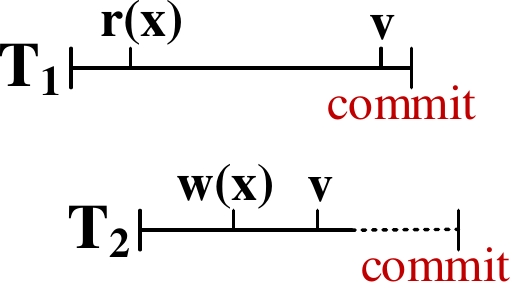}
	}
	\caption{Batching and reordering transactions in OCC's validation phase to reduce abort.}
	\label{img:reorder}
\end{figure}

Batching technique is always about the trade-off between throughput and latency. Batching transactions and reordering the commit order surely increase throughput in mining phase, so does latency. However, validators occupy the vast majority of nodes in blockchain systems. Sacrifices made by miner brings much more benefit to validators and further improves throughput of the whole system.



In order to get a serializable and high parallelism schedule, the vital problem is to find a minimal subset $B'$ of all vertices whose removal maximizes parallelism defined by Definition \ref{equ:density} of the output graph and guarantees commit ratio no less than $\rho$. Actually, if $CG$ is not acyclic, then we get an optimization problem as FVS.
Unfortunately, since FVS is an NP-hard\cite{kann1992approximability}\cite{karp1972reducibility} problem, we propose a greedy algorithm for finding $B'$ next.


\setlength{\textfloatsep}{0.15cm} 
\setlength{\floatsep}{0.15cm}
\begin{algorithm}[htbp]
	\caption{Concurrent Mining}
	\label{alg:concurrentmining}
	\KwIn{A batch of transactions $B$, commit ratio $\rho$}
	\KwOut{A transaction dependency graph $TDG$}
	{\small
		Initialize an output $TDG$; \\
		$nCommit \leftarrow 0$;\\
		$B' \leftarrow B$;\\
		\While{$nCommit < \rho |B|$}{
			$CG \leftarrow ExecuteParallel(B')$;\\
			$B' \leftarrow FindAbortTransactionSet(CG)$;\\
			$CG' \leftarrow CG \setminus B'$;\\
			$\mathcal{O} \leftarrow TopologicalSort(CG')$;\\
			\For{each $t \in \mathcal{O}$}{
				$txCommit(t)$;\\
				$nCommit \leftarrow nCommit+1$;\\
				$UpdateGraph(t, TDG)$;\\
			}
		}
		\Return{$TDG$};
	}
\end{algorithm}

Algorithm \ref{alg:concurrentmining} briefs how to enable concurrency in mining phase and generate a schedule log represented by $TDG$. Line 3 initializes $B'$ with $B$ which indicates the current aborted transaction set that needs to be re-executed. The codes inside the \textit{while loop} (Lines 4-12) firstly runs $B'$ in parallel. When all transactions of a batch reach validation phase, we construct a local $CG$ by creating one vertex per transaction, and one edge per \textit{read-write} conflict relationship. Algorithm checks whether $RS(T_i) \cap WS(T_j) = \emptyset$ to determine if there is a conflict edge from $T_i$ to $T_j$. Next, function $FindAbortTransactionSet$ computes an optimized vertex set $B'$ based on $CG$. After removing $B'$ from $CG$, the algorithm repeatedly commits transactions without any incoming edge using a topological sort. Each successful commit triggers function $UpdateGraph$ which creates a new vertex and makes use of $t$'s read set to generate edges. The algorithm loops the procedure until the commit ratio $\rho$ is satisfied.


\begin{figure*}[htbp]
	\centering
	\subfigure[Original Conflict Graph]{
		\label{img:sccabort:a}
		\includegraphics[scale=0.49]{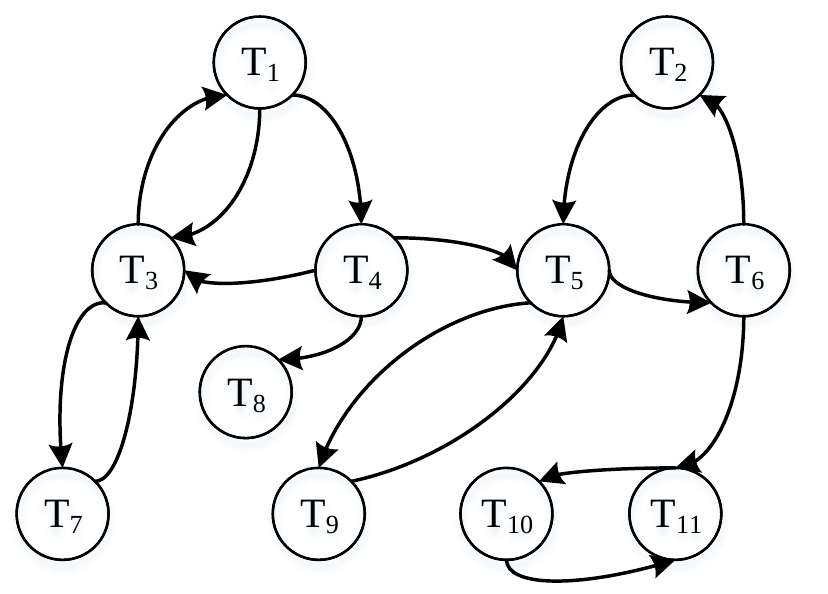}
	}
	\hfill
	\subfigure[Find all SCCs]{
		\label{img:sccabort:b}
		\includegraphics[scale=0.49]{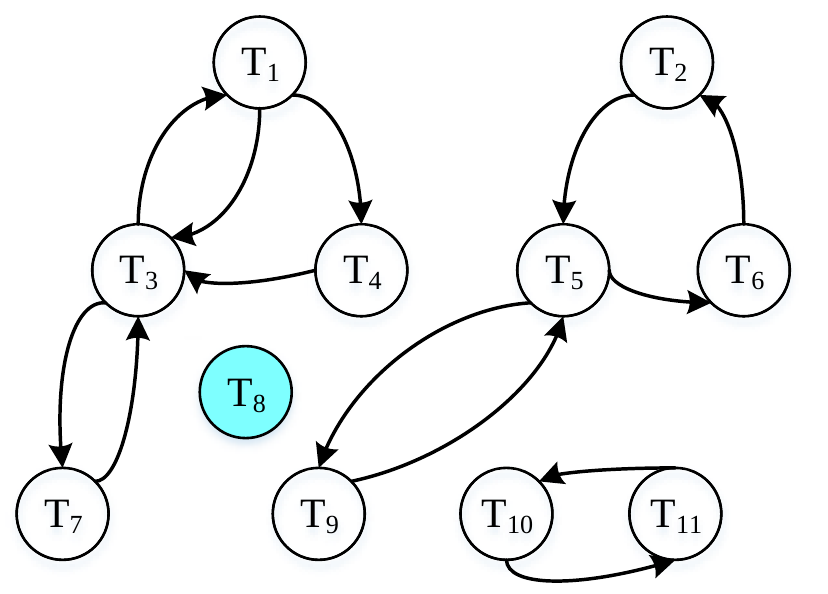}
	}
	\hfill
	\subfigure[Abort $T_3$ and $T_5$]{
		\label{img:sccabort:c}
		\includegraphics[scale=0.49]{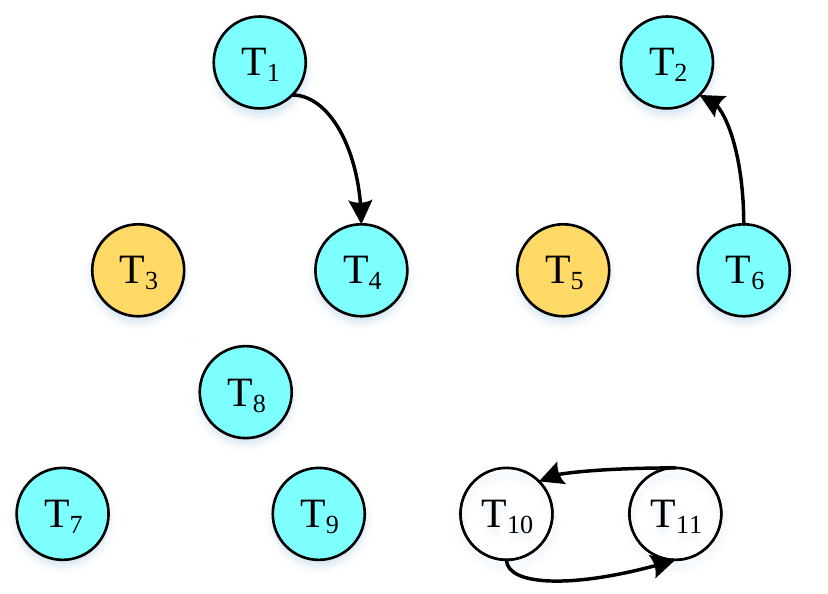}
	}
	\hfill
	\subfigure[Abort $T_{10}$]{
		\label{img:sccabort:d}
		\includegraphics[scale=0.49]{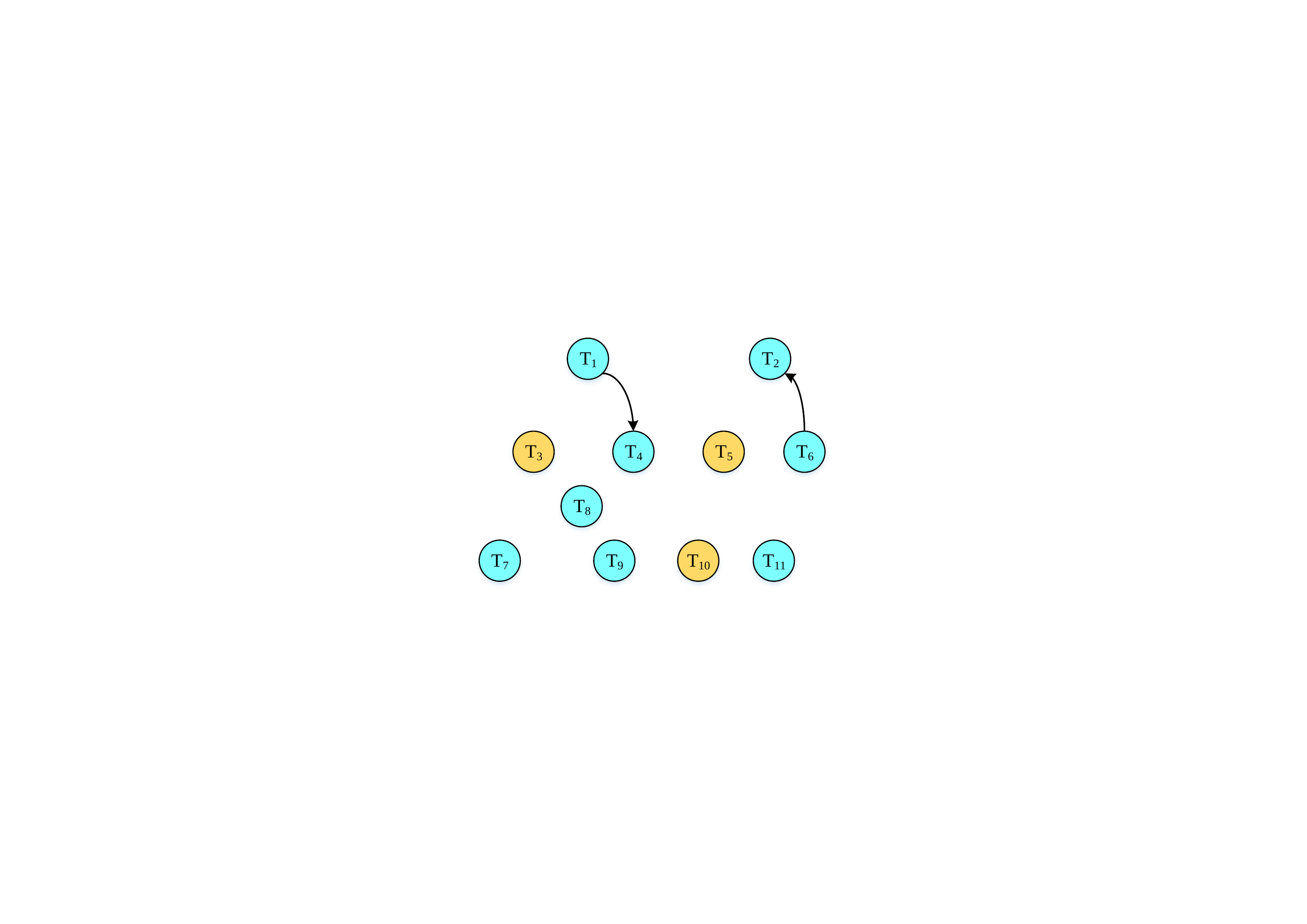}
	}
	\caption{An example of $FindAbortTransactionSet$ algorithm. Yellow vertices form the abort set, and blue ones are pruned during the algorithm.}
	\label{img:sccabort}
\end{figure*}

\stitle{Finding abort transactions}

The heart of our concurrent mining approach lies on the greedy function $FindAbortTransactionSet$. This function selects vertices that are most likely to be included in an abort set and meets our objective which is producing a $TDG$ with the largest parallelism. The heuristic rule behind Algorithm \ref{alg:aborttx} is that each strongly connected component (SCC) of the conflict graph must contain at least one cycle. We can use some previous work like Kosaraju's algorithm\cite{sharir1981strong} and Gabow's\cite{gabow2000path} to separate all SCCs in linear time complexity for further processing. 

For each SCC, function $GreedySelectVertex$ returns a subset of transactions to abort, and the output set $B'$ is the union of all subsets. Lines 7-12 in Algorithm \ref{alg:aborttx} illustrate how we recursively choose vertices to form the abort set. SCCs of size one are pruned (Line 8).
We sort all vertices within one SCC in descending order by the strategy defined by Strategy \ref{strategy1} , and greedily select the top-ranked vertex (Line 10). Line 11 removes the chosen vertex $\overline{V}$ and returns the remaining pruned graph. Then we recursively call function $GreedySelectVertex$ until sizes of all SCCs are less than 1.

\begin{strategy} [max-in \& min-out strategy] \label{strategy1}
	Suppose we have a vertex $T_a$ in $CG$ with an outgoing edge set $E_o=\{(T_a, T_i)|i \neq a\}$ and an incoming edge set $E_i=\{(T_j, T_a)|j \neq a\}$. If we abort and rerun $T_a$, then we have a corresponding incoming edge set $E_i'=\{(T_i, T_a')\}$ of $T_a'$ in the transformed $TDG$. And none of $T_a$'s incoming edge is included. As the objective is to minimize the density of $TDG$, we choose the vertex with largest in-degree and smallest out-degree to abort so that a sparser $TDG$ can be obtained.
\end{strategy}

\setlength{\textfloatsep}{0.15cm} 
\setlength{\floatsep}{0.15cm}
\begin{algorithm}[!hbt]
	\caption{FindAbortTransactionSet}
	\label{alg:aborttx}
	\KwIn{Conflict Graph $CG$}
	\KwOut{$B'$, a vertex set to be aborted}
	{\small
		$B' \leftarrow \emptyset$;\\
		$\overline{CG} \leftarrow Prune(CG)$;\\
		$SCC \leftarrow Kosaraju(\overline{CG})$;\\
		\For{each $S \in SCC$}{
			$B' \leftarrow B' \cup GreedySelectVertex(S)$;
		}
		\Return{$B'$};\\
		\Function{ $GreedySelectVertex(S)$}{
			\If{$|S.V| \le 1$}{
				\Return{$\emptyset$};
			}
			$\overline{V} \leftarrow ChooseVertexByStrategy(S)$;\\
			$\overline{S} \leftarrow Prune(S \setminus \overline{V})$;\\
			\Return{$\overline{V} \cup GreedySelectVertex(\overline{S})$};
		}
	}
\end{algorithm}


\vspace{-0.15cm}
\begin{exam}\label{exam:sccabort}
	Figure $\ref{img:sccabort}$ gives an instance of the greedy recursive algorithm that aims at maximizing the parallelism of the transaction dependency graph. Figure $\ref{img:sccabort:a}$ is the original example conflict graph with no vertex to be pruned during the pre-processing. We use Kosaraju's SCC algorithm\cite{sharir1981strong} to partition the example graph into several SCCs, and remove SCCs of size one which will be marked blue, e.g., vertex $T_8$ in Figure $\ref{img:sccabort:b}$. In the first SCC that consists of vertex $T_1$, $T_4$, $T_3$ and $T_7$, vertex $T_3$ has the largest in-degree, so we abort it, remove all relevant edges and mark it as yellow. The algorithm takes the remaining part of the first SCC as input and recursively find the next transaction to abort. We prune vertex with zero in-degree or out-degree before and during the algorithm. Here, vertex $T_1$, $T_4$ and $T_7$ are all trimmed (Figure $\ref{img:sccabort:c}$). We then move to next SCC containing vertex $T_2$, $T_5$, $T_6$ and $T_9$. We choose vertex $T_5$ ($in$-$degree=2$) as the abort transaction. The rest vertices of this SCC satisfy the condition for pruning once vertex $T_5$ is removed. We look at the last SCC containing vertex $T_{10}$ and $T_{11}$. Since vertex $T_{10}$ and $T_{11}$ have the same in-degree and out-degree, we can add either one of them to the abort set. Here we select vertex $T_{10}$. And vertex $T_{11}$ is pruned (Figure $\ref{img:sccabort:d}$). In Figure $\ref{img:sccabort:d}$, we have our final abort set $B' \leftarrow \{T_3, T_5, T_{10}\}$.
\end{exam}

\vspace{-0.15cm}
\subsection{Generate Moderate Granularity Schedule Log}\label{sec:framework:partition}

\begin{algorithm}[!hbt]
	\caption{$\tau$-Constrained Partitioned $TDG$}
	\label{alg:partition}
	\KwIn{A transaction dependency graph $G = (V,E)$, workload threshold $\tau$}
	\KwOut{A partition of $G$, $P=\{V_1,\ldots,V_k \}$}
	{\small
		$C \leftarrow \sum_{v \in V}{\omega(v)}$; \\
		$U \leftarrow C \times \tau$; \\
		$V_i \leftarrow \emptyset$; $cost \leftarrow 0$;\\
		$E' \leftarrow SortByWeight(E)$; \\
		\For{each $e \in E'$}{
			$uv \leftarrow e.getUnvisitedVertex()$;\\
			\If{$cost+\omega(uv) > U$}{
				$P.add(V_i)$; \\
				$V_i \leftarrow \emptyset$; \\
				$cost \leftarrow 0$; \\
			}
			$visit[uv] \leftarrow true$; \\
			$V_i \leftarrow V_i \cup \{uv\}$; \\
			$cost \leftarrow cost+\omega(uv)$; \\
			%
			%
		}
		\For{each $v \in V$}{
			\If{$visit[v]=false$}{
				\If{$cost+\omega(v) > U$}{
					$P.add(V_i)$; \\
					$V_i \leftarrow \emptyset$; \\
					$cost \leftarrow 0$; \\
				}
				$visit[v] \leftarrow true$; \\
				$V_i \leftarrow \cup \{v\}$; \\
				$cost \leftarrow cost+\omega(v)$; \\
			}
		}
		\Return{$\{V_1,V_2,\ldots,V_k\}$};
	}
\end{algorithm}

By now, we have a transaction dependency graph whose sparsity $\Pi$ is maximized. But the fine-grained scheduling log provided every transaction a consistent read set will cause surplus parallelism limited by physical cores and also too much communication overhead. So we intend to cut $TDG$ into even pieces so to improve CPU utilization and decrease the communication cost as well.


Since finding the optimal partition of $TDG$ that minimizes the size of all $R_j^i$ is an NP-hard\cite{kann1992approximability} problem. We propose another greedy algorithm to solve the problem described in Section \ref{sec:problem}. Our algorithm is based on a simple rule, that is edges with larger weight are preferred to be included in a part so that the probability of edges with smaller weight being cut edge is increased. 

Algorithm \ref{alg:partition} computes a balanced partition including multiple sub-graphs, each one with a workload no more than the threshold $\tau$. And the size of all $R_j^i$ is minimized as much as possible. Line 1-2 calculate an upper bound weight $U$ of each part. Line 3 initializes the temporary sub-graph and a local variable $cost$ records the current weight of $V_i$. Then we reorder all edges of $E$ by their weight in descending order (Line 4). Once having a sorted edge set $E'$, the algorithm traverses each edge $e$ to compute a partition with minimal edge cut (Line 5-13). Basically, if $e$ connects at least one unvisited vertex $u$, then we add it to the current sub-graph $V_i$ (Line 12) and update $cost$ with $\omega(u)$ (Line 13). When $cost$ exceeds the upper bound $U$, we get a new part (Line 7-10). After visiting every edge of $E$, there may be some unvisited vertices with no incoming/outgoing edges. So, line 14-22 iterate over all vertices in $V$, and assign them to the appropriate sub-graph. Algorithm \ref{alg:partition} accesses all edges and vertices once, it has time complexity of $O(|V|+|E|)$.

\begin{exam}\label{exam:partition}
	Figure $\ref{img:partition}$ illustrates a simple bi-partition process after applying our partition algorithm to an example dependency graph shown on Figure $\ref{img:partition:a}$. Suppose a vertex set $V \leftarrow \{T_1, T_2, \ldots, T_{11}\}$ has a corresponding weight set $W \leftarrow \{11,13,8,6,7,12,1,9,3,2,1\}$. And a workload threshold is set to 0.5. The upper bound of each sub-graph is calculated as $73 \times 0.5=36.5$. We sort all edges by their weight as line 4 suggests. Then we start by the edge with the largest weight which is edge $e: T_8 \rightarrow T_9$. Both start vertex and end vertex of $e$ are unvisited, so we add them to $V_1$. Now the cost of $V_1$ is updated to 12. Next, we process edge $T_4 \rightarrow T_8$, edge $T_1 \rightarrow T_4$ and edge $T_1 \rightarrow T_3$ in order. After that, our first sub-graph contains vertex $T_1$, $T_3$, $T_4$, $T_8$ and $T_9$. When dealing with edge $T_2 \rightarrow T_6$, we find out that the cost of current part $V_1$ is 37 which exceeds $U$, so vertex $T_2$ and $T_6$ are included in a new sub-graph $V_2$. We now have a new sub-graph shown on Figure $\ref{img:partition:c}$. After visiting all edges, there are still two unvisited vertices $T_7$ and $T_{10}$. We add them to $V_2$ one at a time. Finally, we have a 2-way partition presented in Figure $\ref{img:partition:e}$.
\end{exam}

\subsection{Replay Schedule Log in Validation Phase}\label{sec:validation}

\stitle{A deterministic OCC protocol}

The commit order of batching OCC is not deterministic due to the interleavings are arbitrary. Since $TDG$ offers all conflict relationship between transactions and the commit order is predefined as $\mathcal{O}$, batching transactions in the second phase is unnecessary. So we propose a deterministic optimistic concurrency control protocol called DeOCC based on OCC along with \textbf{partitioned TDG} and $\mathcal{O}$. Providing multiple \textit{consistent read sets} for every sub-graph, DeOCC is a non-blocking and no rollback protocol naturally. Every part can be executed concurrently and independently. We take a decentralized execution approach proposed by Anjana\cite{anjana2018efficient} which is shown in Figure \ref{img:decentralizedway}. Multiple threads are working on sub-graphs concurrently in the absence of a master thread.

\begin{figure}[htbp]
	\centering
	\vspace{-0.1cm}  
	\setlength{\abovecaptionskip}{0.1cm}   
	\setlength{\belowcaptionskip}{-0.15cm}   
	\includegraphics[scale=0.56]{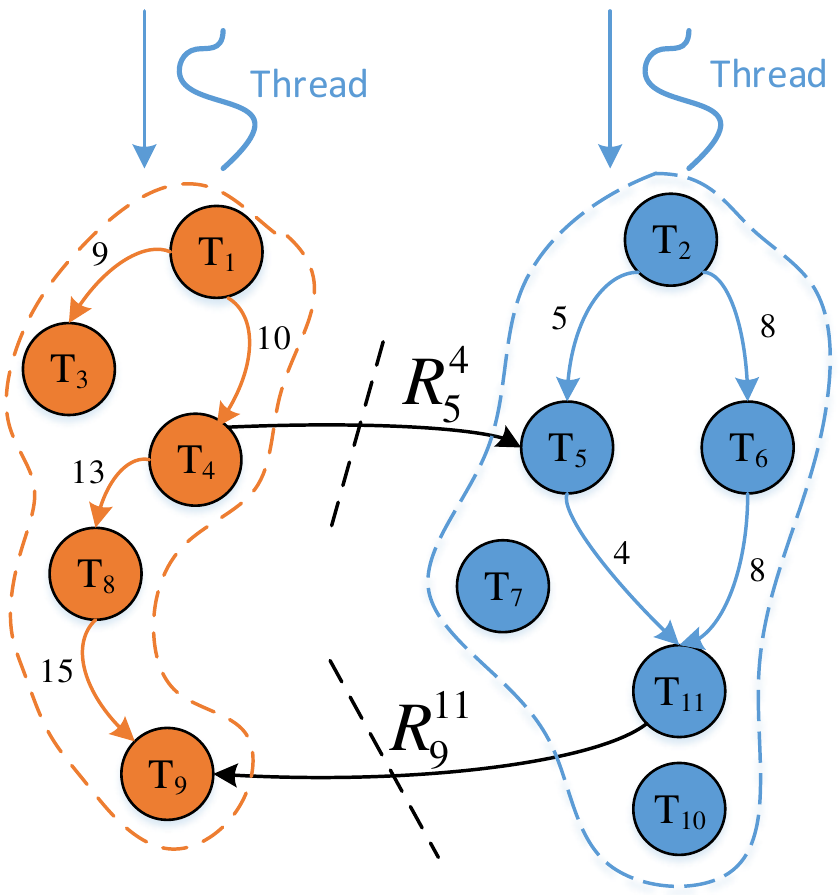}
	\caption{An example of how validators execute transactions in a decentralized way.}
	\label{img:decentralizedway}
\end{figure}

\begin{figure*}[htbp]
	\centering
	\subfigure[Original $TDG$]{
		\label{img:partition:a}
		\includegraphics[scale=0.46]{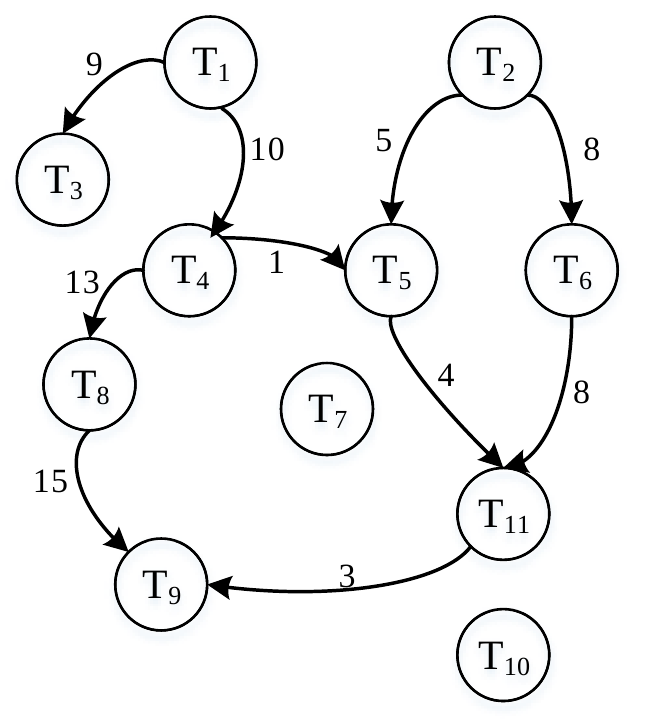}
	}
	\hfill
	\subfigure[Add $T_8$ and $T_9$ to $V_1$]{
		\label{img:partition:b}
		\includegraphics[scale=0.46]{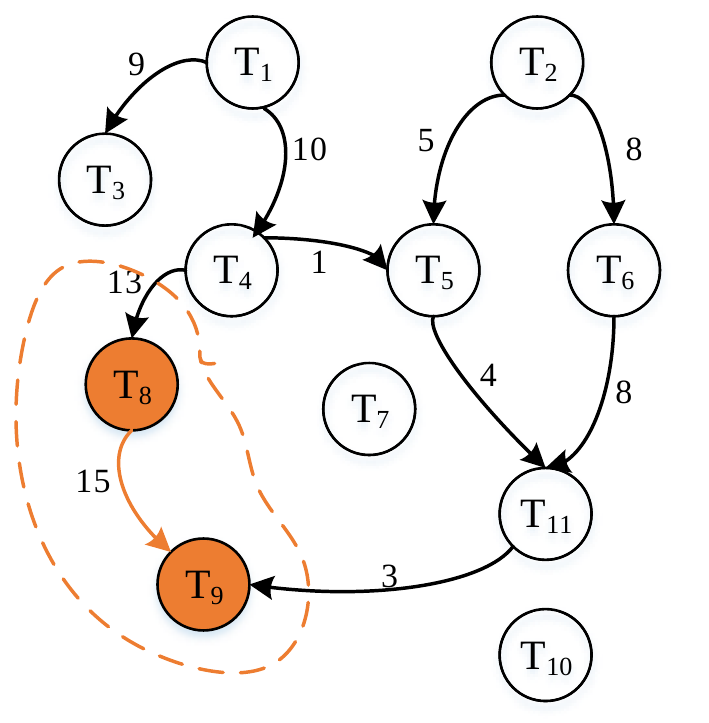}
	}
	\hfill
	\subfigure[Get first part]{
		\label{img:partition:c}
		\includegraphics[scale=0.46]{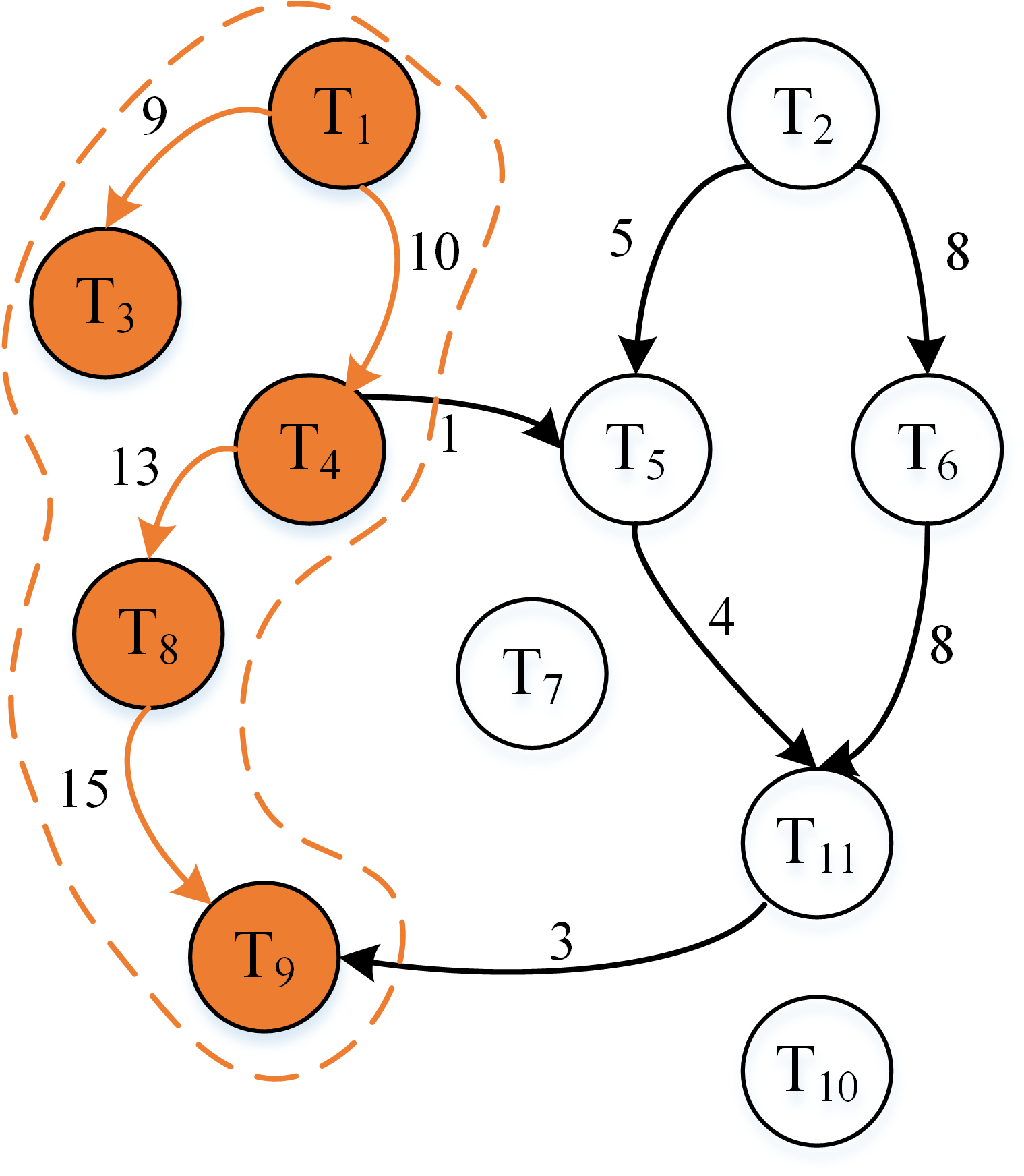}
	}
	\hfill
	\subfigure[Add $T_2$ and $T_6$ to $V_2$]{
		\label{img:partition:d}
		\includegraphics[scale=0.46]{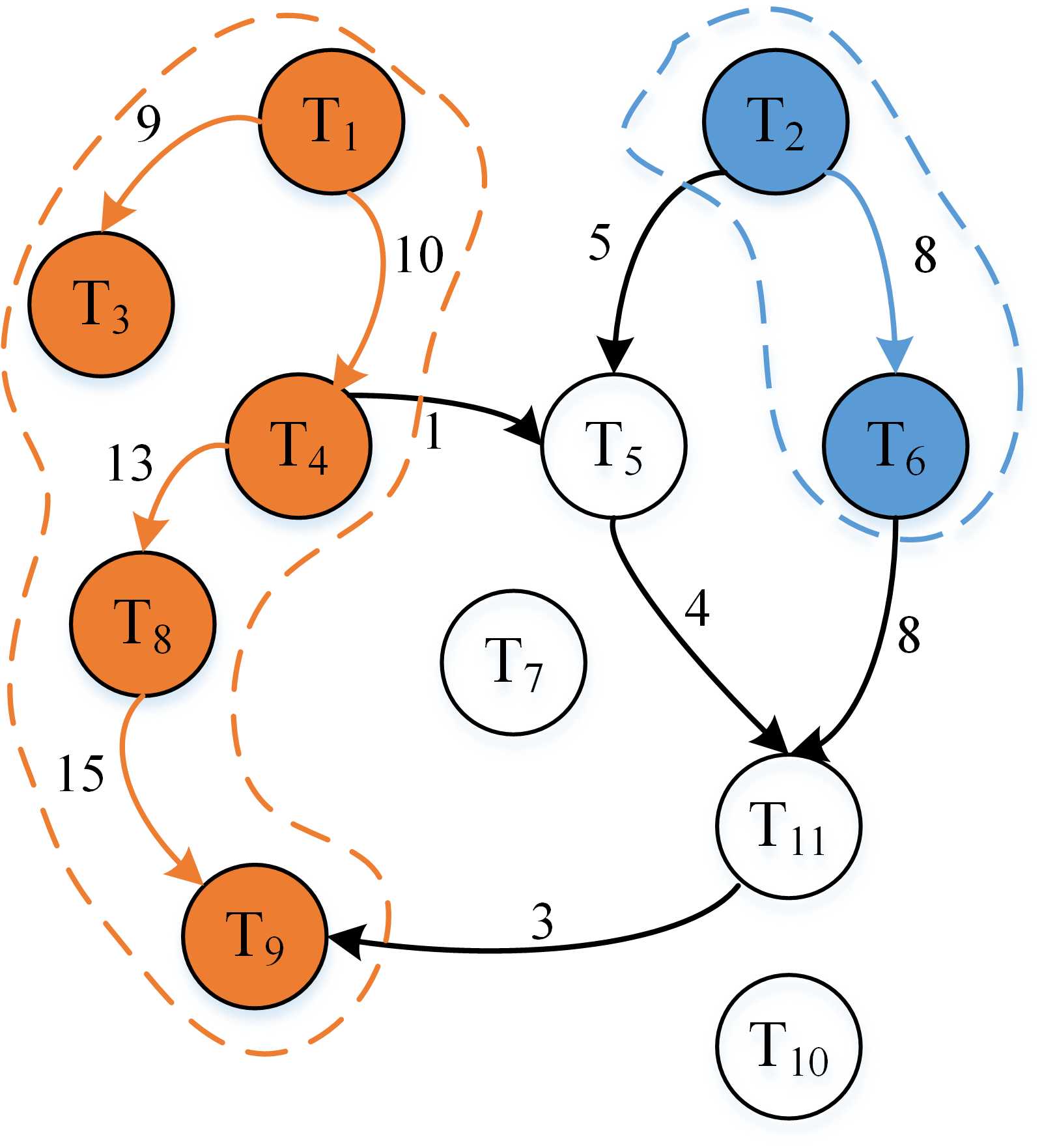}
	}
	\hfill
	\subfigure[Edge cut is 4]{
		\label{img:partition:e}
		\includegraphics[scale=0.46]{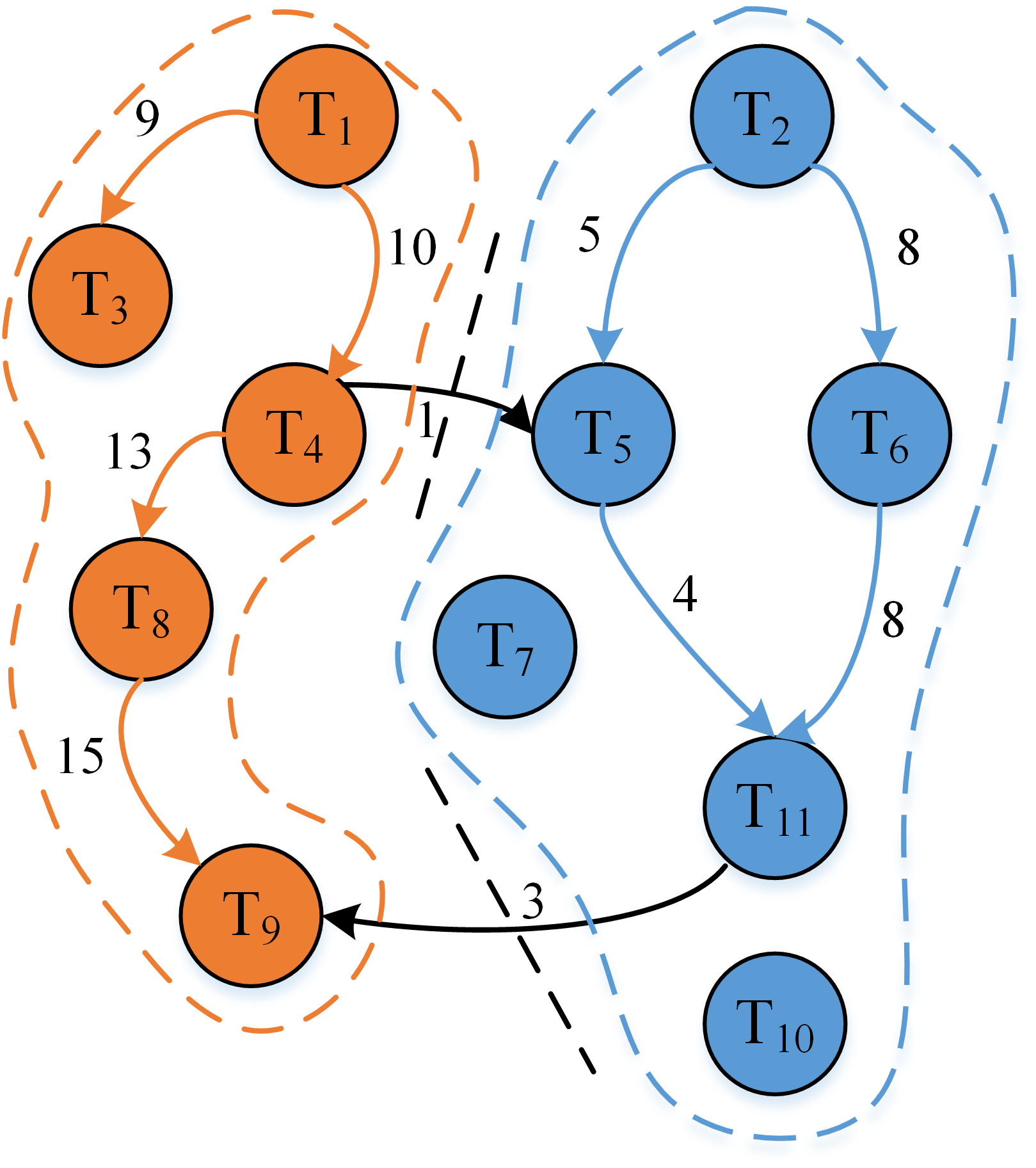}
	}
	\caption{An example of 2-way partitioned $TDG$.}
	\label{img:partition}
\end{figure*}

We add several modifications to allow transactions run concurrently with a predetermined serialization order. Original OCC transactions read values either from shared states or their own write sets. Since a \textit{consistent read set} $R_j^i$ preserves all data item that transaction $T_j$ needs to read from transaction $T_i$'s update, every read operation is able to obtain a consistent value from the graph structure. Although DeOCC transactions have no more occasion to check the consistency of their read sets, they need to verify the correctness of $R_j^i$ cause miner could be a malicious one and send a false value. Of course, the inherent validation mechanism of current blockchain platforms like Ethereum still works under our deterministic protocols. After executing all transactions and finishing the state transition, one can recalculate the state merkle root and compare it with the root sent by the miner. This default setting will waste computational resources for it has to verify the correctness after completing the execution. We propose a new validation scheme embedded in our DeOCC protocol which is explained later. Recall that validators need to produce the same serial order schedule with the miner. Given this constraint, DeOCC transactions must commit according to $\mathcal{O}$. 


Algorithm \ref{alg:deocc} describes a normal DeOCC transaction. When a DeOCC transaction $T$ begins, it gets an additional parameter, serial number $seq_t$, representing the order of this transaction in the serialization order $\mathcal{O}$ discovered by the miner. 


In read phase, $T$ finds a proper value of the data object at most of the time by scanning the \textit{consistent read set}. The reason transactions needn't to check the consistency of its read set is that $R_j^i$ already does that verification by recording every version, i.e., transactions have to read from their former transactions' committed update linked by dependency edges. A read operation on the data object \textit{item} either returns the value from transaction $T$'s \textit{consistent read set} (Line 7) or its own write set (Line 5). A value from local storage is returned in case transaction $T$ doesn't have a \textit{consistent read set}. Write operations remain the same with the OCC transaction which buffer the new value in other locations and apply a \textit{write back} strategy after passing the new verification mechanism. 





Transactions that pass our new validation scheme will enter the write phase. We force transactions to commit according to the predefined order $\mathcal{O}$ so that every transaction has to wait until all its predecessor transactions commit. After writing back transaction's updates, the global sequence value $seq_c$ is assigned to $seq_t$ as line 16 in Algorithm \ref{alg:deocc} suggests.

\begin{algorithm}[ht]
	\caption{A DeOCC transaction $t$}
	\label{alg:deocc}
	{\small
		\When{$txStart(t, seq)$}{
			$seq_t \leftarrow seq$;\\
		}{}
		\When{$txRead(t, item)$}{
			\eIf{$WS(t).contains(item)$}{
				$ReadFromWriteSet(item, WS(t), RS(t))$;}{
				$ReadFromTDG(item, TDG, R_t)$;}
		}{}
		\When{$txWrite(t, item, value)$}{
			$DeferredWrite(item, value, WS(t))$;\\
		}{}
		\When{$txCommit(t)$}{
			\textbf{Serial} /*one by one*/\\
			$ConditionWait(seq_c = predecessor(seq_t))$;\\
			\eIf{$VerifyReadset(RS(t))$}{
				$WriteBack(WS(t))$;\\
				$ComputeConsistentSet(WS(t), TDG)$;\\
				$seq_c \leftarrow seq_t$;\\
			}{
				$AbortTransaction(t)$;\\
			}
		}{}
	}
\end{algorithm}

Restricting commit order directly results in a serial commit which causes a loss of parallelism. To reduce the loss as much as possible, we see room for the transaction that is next to commit. Notice that, at any time, there exists one single transaction which is about to commit according to the predefined commit order. Another fact is that most OCC overheads lie on the deferred write mechanism and validation of the read set. Base on these two facts, we further improve the basic DeOCC protocol with a fast mode proposed. When current committed sequence number $seq_c$ equals to a transaction $T$'s sequence number $seq_t$ (Line 1), we say $T$ then switches to the fast mode. Algorithm \ref{alg:fastdeocc} demonstrates some details about fast mode execution.

\begin{algorithm}[htbp]
	\caption{Fast mode of a DeOCC transaction $t$}
	\label{alg:fastdeocc}
	{\small
		\When{$seq_c = predecessor(seq_t)$}{
			\eIf{$VerifyReadset(RS(t))$}{
				$WriteBack(WS(t))$;\\
			}{
				$AbortTransaction(t)$;\\}
		}{}
		\When{$txRead(t, item)$}{
			$read(t, item)$;\\
		}{}
		\When{$txWrite(t, item, value)$}{
			$DirectUpdate(t, item, value)$;\\
		}{}
		\When{$txCommit(t)$}{
			$ComputeConsistentSet(WS(t), TDG)$;\\
			$seq_c \leftarrow seq_t$;\\
		}{}
	}
\end{algorithm}

As shown in Algorithm \ref{alg:fastdeocc}, the validation phase is moved forward when a transaction is going to enter the fast mode (Line 2-5). Still, transactions need to validate the normal DeOCC execution done up to that point. 

With our fast mode, write operations in read phase use \textit{direct write} strategy instead of deferred updates. Since transactions perform in place updates, read operations no longer read values from $R_j^i$ or its write set. They simply read the current data item's value. Also, the cost of tracking read sets will be eliminated. For we combine read phase and write phase, there is no more \textit{write back} step when transactions commit.

\stitle{Verification of malicious miner}

As we mentioned before, a DeOCC transaction will never read an inconsistent value for $R_j^i$ storing all proper values that transaction needs to read. However, parties involved in blockchain platforms like Ethereum have no trust in each other. In other words, \textit{consistent read sets} transferred by miner require to be verified for their correctness. 

The default setting of blockchain uses root of the Merkle tree to check the validity of the state transition. All of the existing research works adopt the default verification mechanism to detect malicious behaviors of the miner. By this way, however, lots of computational resources may be wasted if the miner is a malicious one. We intend to embed the verification process to our deterministic protocol. Here is how we do the verification. Basically, each transaction computes the \textit{consistent read sets} $R_j^i$ when committing (Line 15 of Algorithm \ref{alg:deocc}). If it provides a read set $\overline{R_j^i}$ in the transaction dependency graph, then the transaction contrasts $R_j^i$ and $\overline{R_j^i}$, and a \textit{verified} tag will be attached to all items in $\overline{R_j^i}$ if $R_j^i = \overline{R_j^i}$. In this way, transactions check whether their read values are all tagged with \textit{verified}. Any unverified item will cause aborting the block. Different from the previous work, our verification scheme can detect malicious miner beforehand.
%
\section{Performance analysis}\label{sec:analysis}
\begin{figure*}[htbp]
	\centering
	\vspace{-0.1cm}  
	\setlength{\abovecaptionskip}{0.1cm}   
	\setlength{\belowcaptionskip}{-0.15cm}   
	\subfigure[Low data contention]{
		\label{img:evaluation:one:a}
		\includegraphics[scale=0.23]{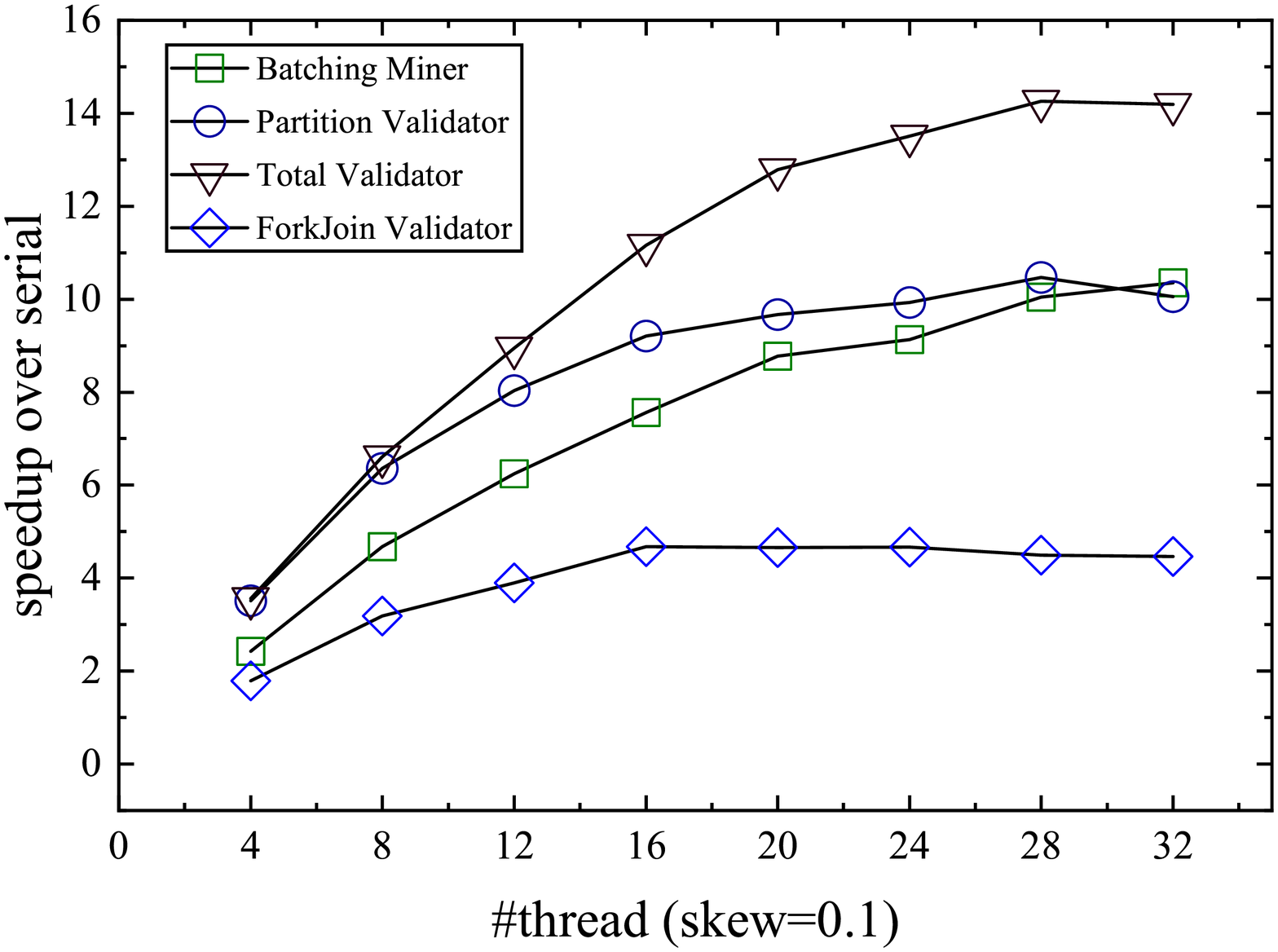}
	}
	\subfigure[Medium data contention]{
		\label{img:evaluation:one:b}
		\includegraphics[scale=0.23]{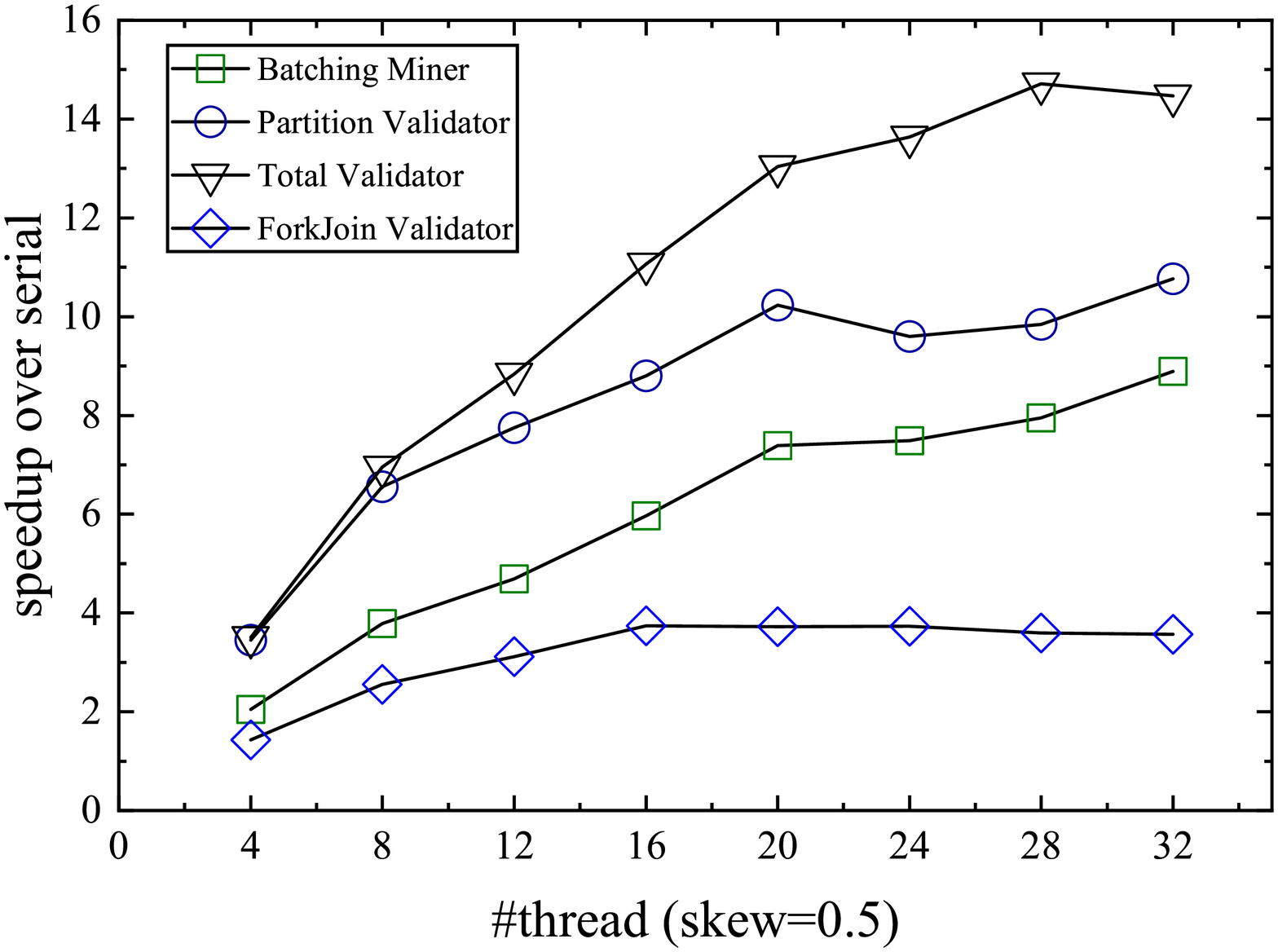}
	}
	\subfigure[High data contention]{
		\label{img:evaluation:one:c}
		\includegraphics[scale=0.23]{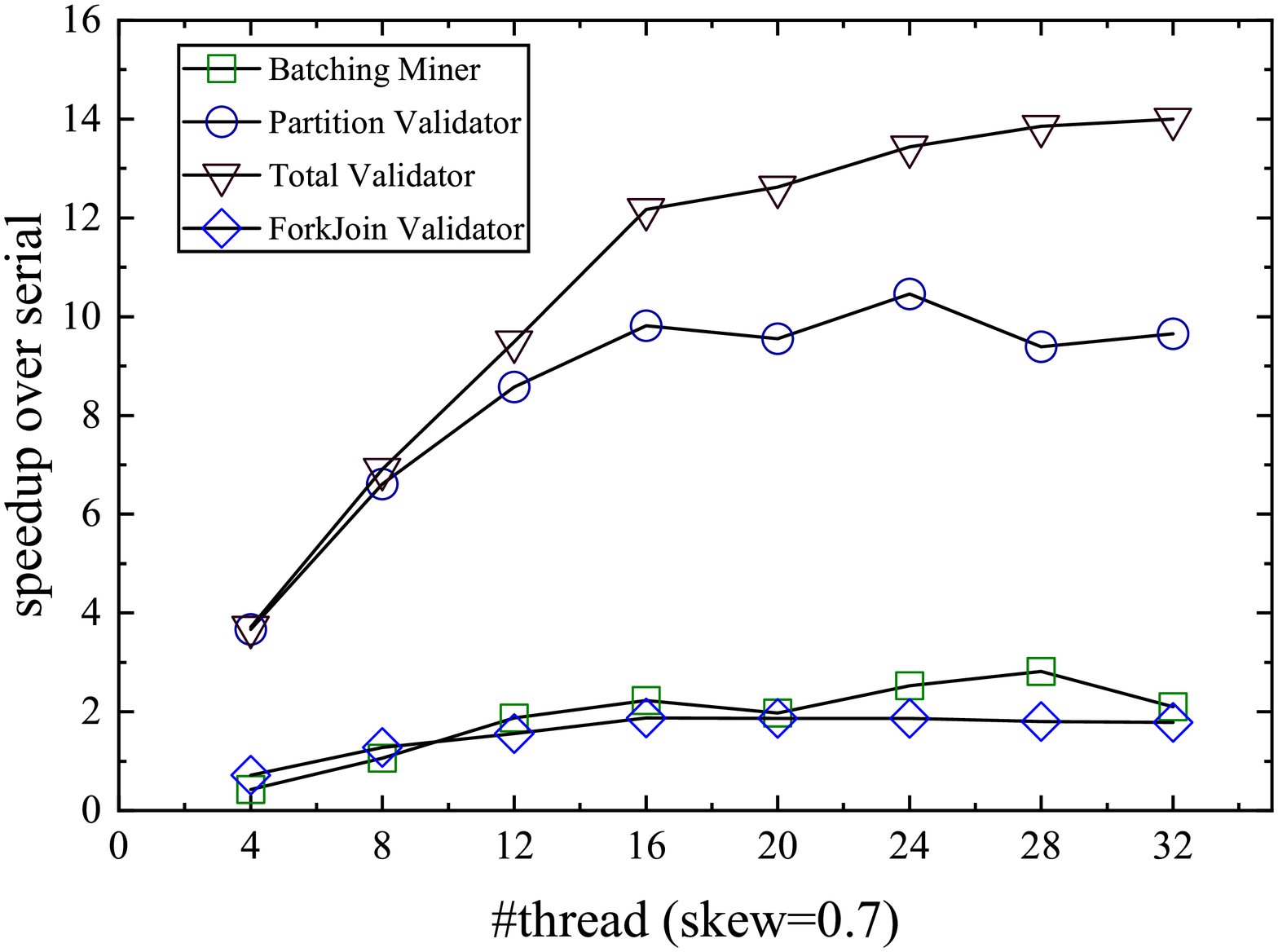}
	}
	\caption{Speedup against the number of threads}
	\label{img:evaluation:one}
\end{figure*}

\begin{figure*}[htbp]
	\centering
	\vspace{-0.15cm}  
	\setlength{\abovecaptionskip}{0.1cm}   
	\setlength{\belowcaptionskip}{-0.05cm}   
	\setlength{\floatsep}{-0.1cm}
	\subfigure[$skew = 0.1$]{
		\label{img:evaluation:two:a}
		\includegraphics[scale=0.23]{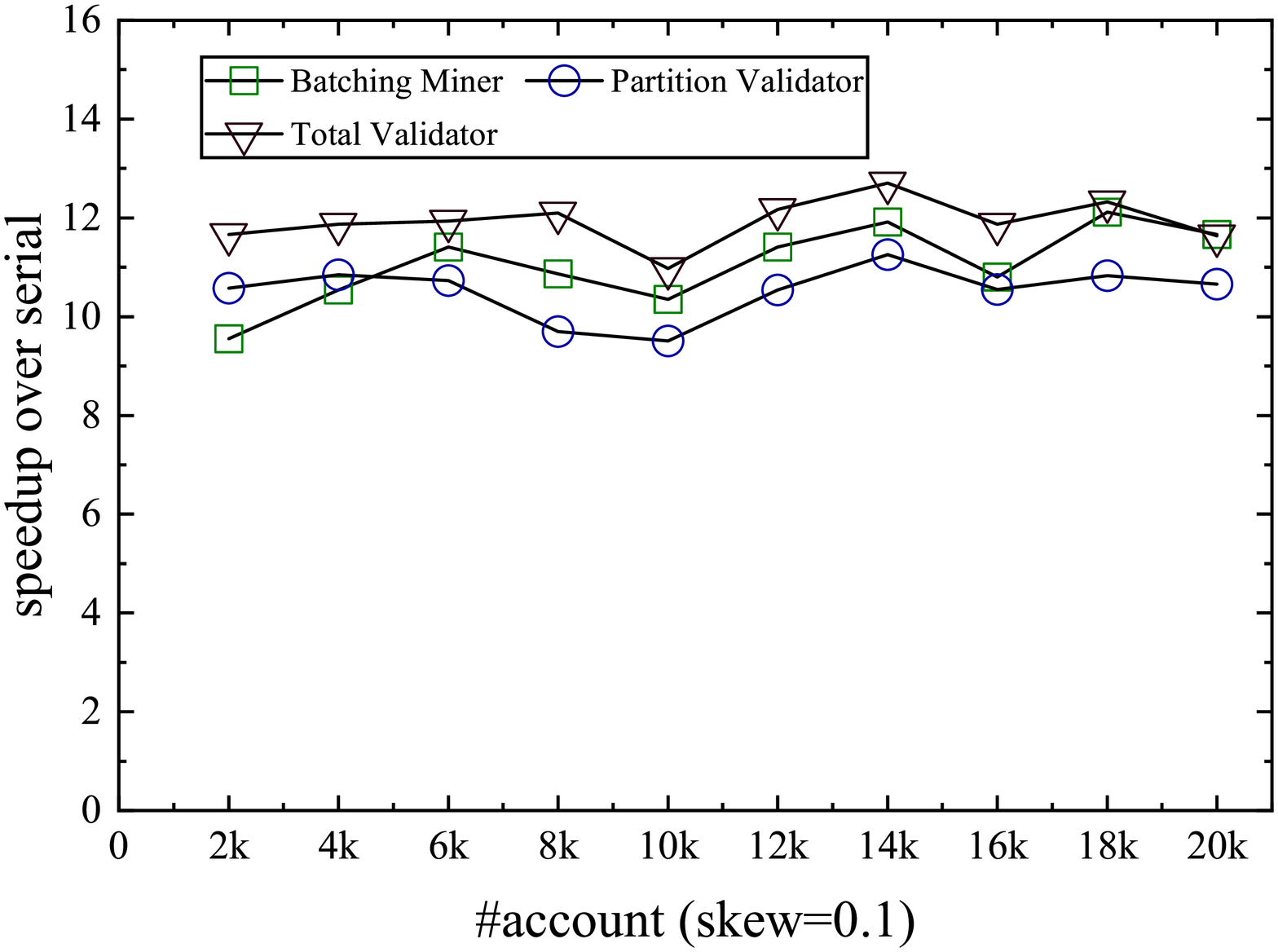}
	}
	\subfigure[$skew = 0.5$]{
		\label{img:evaluation:two:b}
		\includegraphics[scale=0.23]{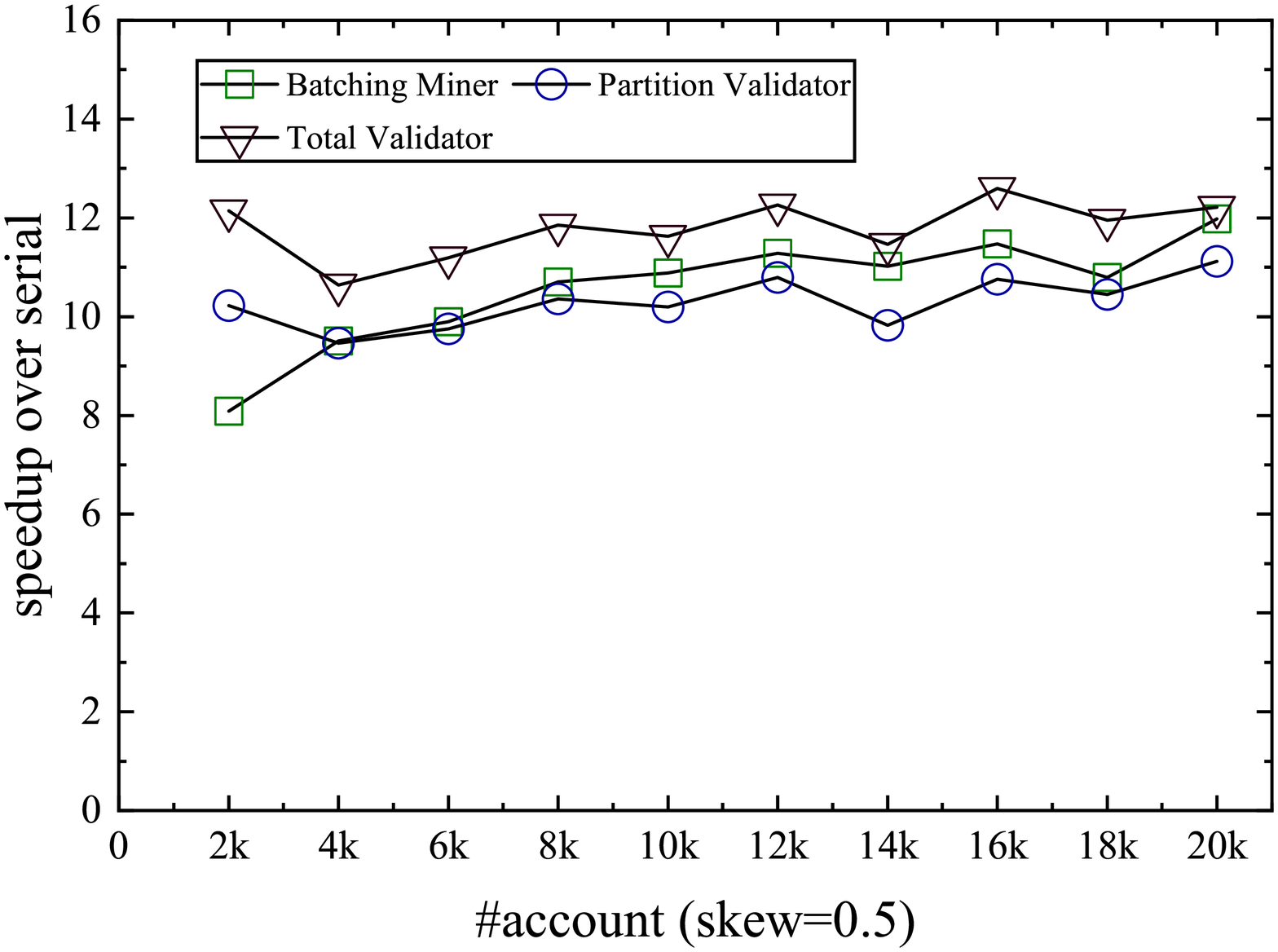}
	}
	\subfigure[$skew = 0.7$]{
		\label{img:evaluation:two:c}
		\includegraphics[scale=0.23]{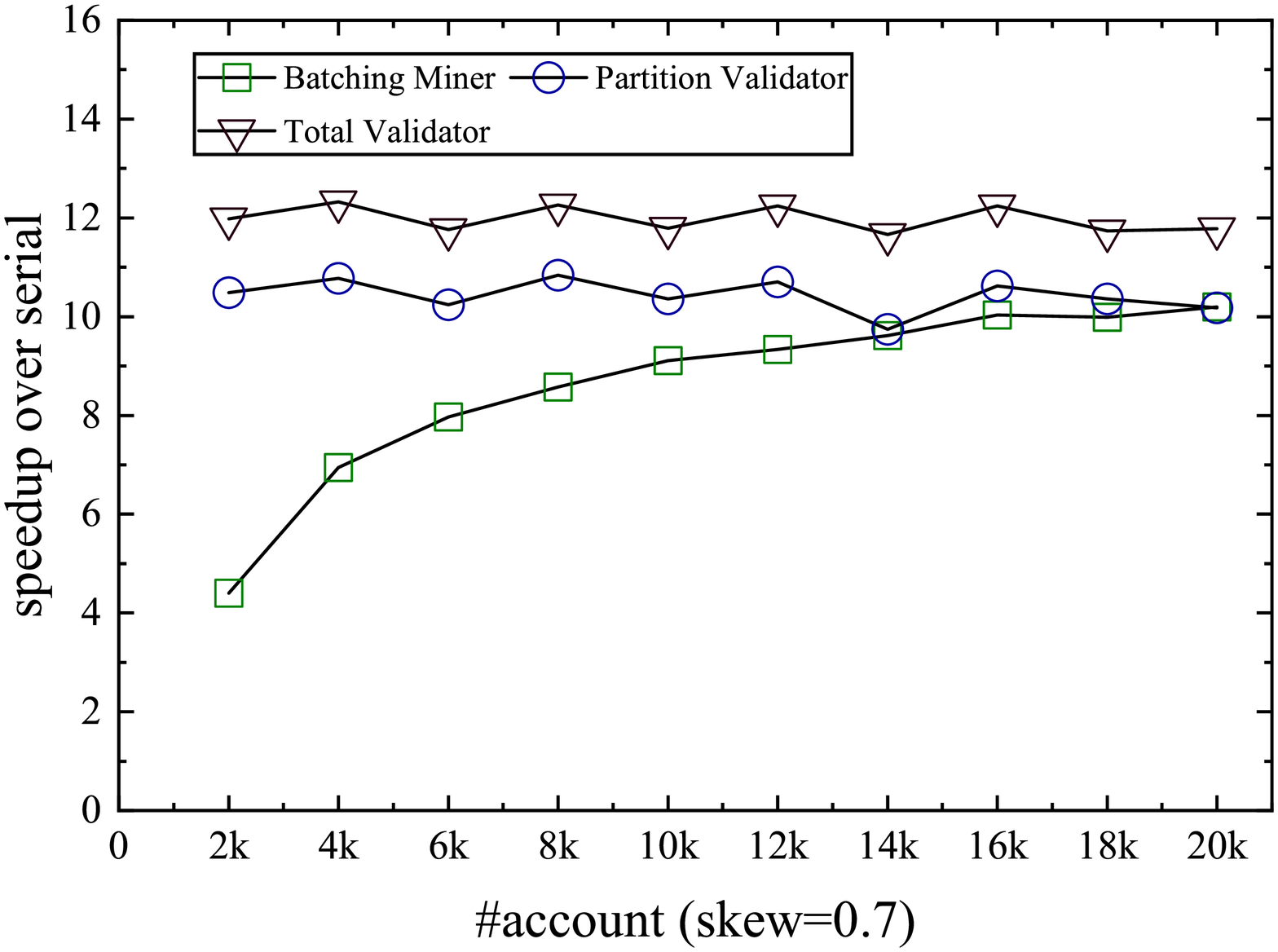}
	}
	\caption{Speedup against the number of accounts}
	\label{img:evaluation:two}
\end{figure*}

We analyze the performance of our two-phase framework in this section.
As mentioned above, miner finds a serialization order using batching OCC in the first phase, and then validators replay and verify the schedule log with DeOCC protocol in the second phase. 
Obviously, the overall cost of our method consists of two parts: communication cost and computational cost.
Communication cost is brought by the information sent from the miner to validators, and the computational cost is the execution time of a batch of transactions. 
To some extent, there is a trade-off between communication cost and computational cost, i.e, receiving more information contained in the schedule log will enable a higher execution speed in validators. 

\stitle{Communication cost}.
The Communication cost refers to the scheduling logs sent by the miner. Suppose the original transaction dependency graph $G=(V,E)$ where $|V|=n$ is broke down to $k$ sub-graphs which are $G_1$, $G_2$, \ldots, $G_k$. Each sub-graph $G_i$ has an extra input data set $D_i$ respectively. The overall communication cost is computed as $|T|=\sum_{i=1}^{k}(|G_i|+|D_i|)$. 

For simplicity, the time to transfer data via network is computed as the data volume times network bandwidth. Let $\alpha$ denote the network bandwidth, the overall cost is computed as $\alpha\cdot |T|$.

\stitle{Computational cost}.
In our framework, all smart contracts must be evaluated one by one in each validator. Moreover, all smart contracts belonging to the same sub-graph will be verified in the same core. Hence, if the number of sub-graphs is smaller than the number of cores, i.e, $k<m$, some cores will be free. Hence, in real-life cases, we will set $k\gg m$ so that all sub-tasks can be assigned to each core evenly to reduce the execution cost. 

\begin{theorem}
	The upper and lower bounds of the computation cost to execute $G$ on an $m$-core processor are computed below, where $\Psi$ is the overall computation cost, and $\tau$ is the maximum size of each sub-graph.
	\begin{eqnarray}
	LB & = & \Psi/m \nonumber\\
	UB & = & \Psi(\frac{1-\tau}{m}+\tau) \nonumber
	\end{eqnarray}
\end{theorem}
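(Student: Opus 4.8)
The plan is to recognize this statement as a classical makespan (list-scheduling) bound in disguise. Each of the $k$ sub-graphs produced by the $\tau$-constrained partition of Definition~\ref{def:problem2} is treated as an indivisible task whose size is its computation cost $\omega(V_i)$; the total work is $\Psi=\sum_{i}\omega(V_i)=\sum_{v\in V}\omega(v)$, and since each core runs the sub-graphs assigned to it sequentially, the parallel computation cost equals the makespan, i.e.\ the maximum total load over the $m$ cores. The $\tau$-constraint supplies the one extra ingredient: every task satisfies $\omega(V_i)\le U=\tau\Psi$, so the largest task size obeys $p_{\max}\le\tau\Psi$. With this dictionary in place, the whole proof reduces to bracketing the makespan between a trivial averaging bound and Graham's greedy bound.

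For the lower bound I would argue by averaging. No matter how the tasks are assigned, the loads of the $m$ cores sum to exactly $\Psi$, so at least one core must carry load at least the average $\Psi/m$; hence the makespan, and therefore the computation cost, is never smaller than $LB=\Psi/m$. This is tight in the idealized perfectly balanced case, which is precisely why it is the right lower bound.

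For the upper bound I would invoke the standard list-scheduling argument, which matches the partition-then-assign strategy sketched in this section (place each sub-graph on the currently least-loaded core, in the regime $k\gg m$ so that no core sits idle). Let $C_{\max}$ be the makespan and let $j^\star$ be the last task placed on the core that finishes last. By the greedy rule, $j^\star$ was assigned to a least-loaded core at that moment, so every core already held load at least $C_{\max}-p_{j^\star}$; summing over the $m$ cores and adding back $p_{j^\star}$ gives $\Psi\ge m\,(C_{\max}-p_{j^\star})+p_{j^\star}$. Rearranging yields $C_{\max}\le \Psi/m+(1-\tfrac1m)\,p_{j^\star}$, and substituting $p_{j^\star}\le\tau\Psi$ and simplifying produces exactly $UB=\Psi\bigl(\tfrac{1-\tau}{m}+\tau\bigr)$; as a sanity check, $\tau=0$ collapses $UB$ to $\Psi/m$ and $\tau=1$ gives $UB=\Psi$, matching the two extreme schedules.

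The main obstacle is not the algebra but pinning down the model cleanly: one must justify that per-core cost is additive (each core executes its sub-graphs serially, so loads simply add), that the makespan is the intended meaning of ``computation cost'' here, and above all that the theorem's parameter $\tau$ is correctly identified with the partition constraint so that $p_{\max}\le\tau\Psi$ holds. That last link is only implicit in Algorithm~\ref{alg:partition}, where $U=C\times\tau$ with $C=\Psi$, so I would state it explicitly as a hypothesis. Once the greedy least-loaded assignment is fixed and $p_{\max}\le\tau\Psi$ is established, both bounds follow immediately from the averaging and Graham arguments above.
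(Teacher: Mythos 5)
Your proof is correct and takes essentially the same approach as the paper: the lower bound is the identical averaging argument (total work $\Psi$ spread evenly over $m$ cores), and your Graham list-scheduling bound $C_{\max}\le \Psi/m+(1-\tfrac{1}{m})p_{\max}$ with $p_{\max}\le\tau\Psi$ is exactly the rigorous form of the paper's informal worst case, in which the largest sub-task of size $\tau\Psi$ starts just as the other $m-1$ cores finish simultaneously. Your explicit statement of the hypothesis $\omega(V_i)\le\tau\Psi$, read off from the partition algorithm's bound $U=C\times\tau$, merely makes precise what the paper leaves implicit, so it strengthens rather than changes the argument.
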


\begin{proof}
	Each validator will try to evaluate all smart contracts in a batch. Hence, under the best situation, the evaluation task will be dispatched to $m$ cores evenly. Hence, the lower bound will be $\Psi/m$. However, due to load unbalance, all $m$ cores may not stop at the same time. In the ultimate situation, when one core start to execute the largest sub-task, the rest $m-1$ cores happen to finish all subtasks assigned to them at the same time. In this way, we get the upper bound (UB), as listed above. 
\end{proof}
\vspace{-0.15cm}
It is interesting to discuss the case where each sub-graph only contains one smart contract, i.e, $k=n$. In this case, the $m$-core validate can verify all smart contracts almost evenly. Meanwhile the read set for each smart contract must be ready, which means more data to be transferred. 



%
	\vspace{-0.15cm}
\section{Experiments}\label{sec:evaluation}


In this section, we conduct extensive experiments to evaluate the performance of our proposed method by varying the number of threads, conflict rate, the number of transactions and workload threshold. 

\subsection{Benchmark}
As a popular benchmark for OLTP workload, SmallBank \cite{Cahill2008SerializableIF} is also widely used for blockchain systems \cite{dinh2017blockbench}. Based on a schema of three tables, SmallBank defines four basic procedures to model a simple banking scenario, namely \textit{Amalgamate}, \textit{WriteCheck}, \textit{DepositChecking} and \textit{TransactSaving}, each owning several read/write operations. Since transferring money between accounts is also common in blockchain applications, We add \textit{SendPayment} transaction to extend the original SmallBank benchmark. In real blockchain systems like Ethereum\cite{ethereum}, smart contracts can be treated as transactions that consist of a series of read/write operations. Miner collects transactions calling different smart contracts. So our extended benchmark, called SmallBank+, simulates the real workload well and is appropriate for evaluation. 


\subsection{Experiment Setup}
As the smart contracts written in Solidity upon Ethereum platform are not multithreaded, we implement SmallBank+ in Java to utilize CPU resource more efficiently and evaluate the performance on one machine. Note that this evaluation mechanism is also widely adopted by a series of related works \cite{anjana2018efficient,dickerson2017adding,zhang2018enabling}. Each block contains a set of transactions which are implemented by using \textsf{callable} objects in Java. Our workload generator generates blocks with a combination of transaction count, number of accounts and an initial balance for each group of experiment. The transaction type is generated uniformly using \textsf{Random} class in Java. Data access follows a Zipfian distribution to simulate data skew situations. Specifically, lager skew parameter means that fewer data objects are accessed with higher probability, i.e., more transaction conflicts. A thread pool that is created with \textit{ThreadPoolExecutor}, executes all transactions concurrently both for miner and validators. Serial executor runs with only one thread as a baseline to highlight the effect of our proposed two-phase scheme. We populate the database with 100k customers, including 100k checking and 100k savings accounts. Each group of experiments runs with three different skew parameters indicating the conflict intensity. Since following the similar two-phase execution style with Dickerson's work, we additionally implement the fork-join validator proposed by Dickerson\cite{dickerson2017adding} and evaluate it with SmallBank+ benchmark under the same configuration with our scheme for comparison.

Experiments are conducted on a machine of 160GB memory and a 2-socket Intel Xeon Silver 4110 CPU @2.10GHz with 8 cores per socket and two hyper-threads per core. This machine runs CentOS 7 system with JDK version 1.8. All our experimental figures show the averages of 10 runs. 


\subsection{Experimental Reports}
\begin{figure*}[htbp]
	\centering
	\vspace{-0.1cm}  
	\setlength{\abovecaptionskip}{0.1cm}   
	\setlength{\belowcaptionskip}{-0.2cm}   
	\subfigure[$skew= 0.1$]{
		\label{img:evaluation:three:a}
		\includegraphics[scale=0.23]{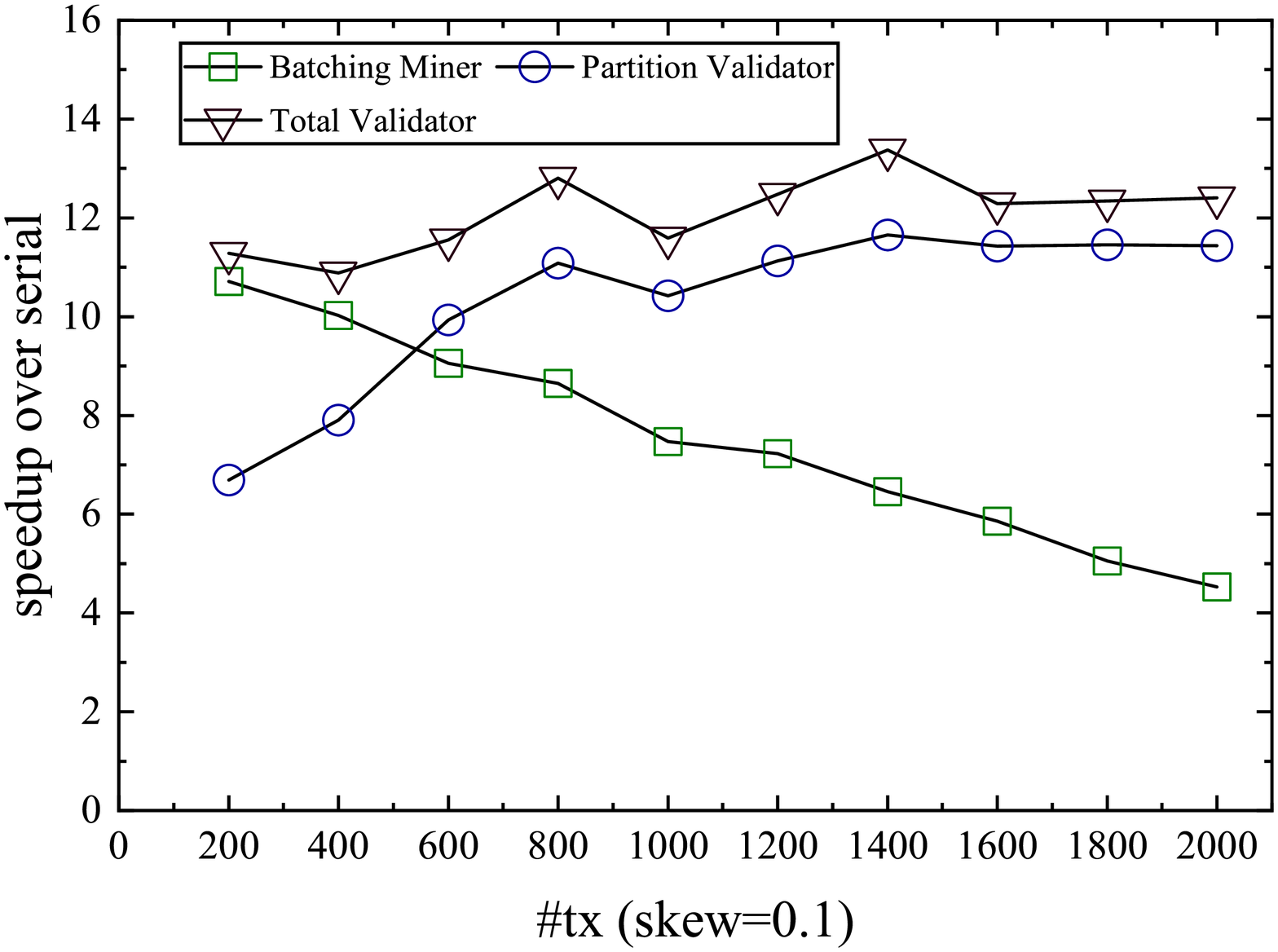}
	}
	\subfigure[$skew=0.5$]{
		\label{img:evaluation:three:b}
		\includegraphics[scale=0.23]{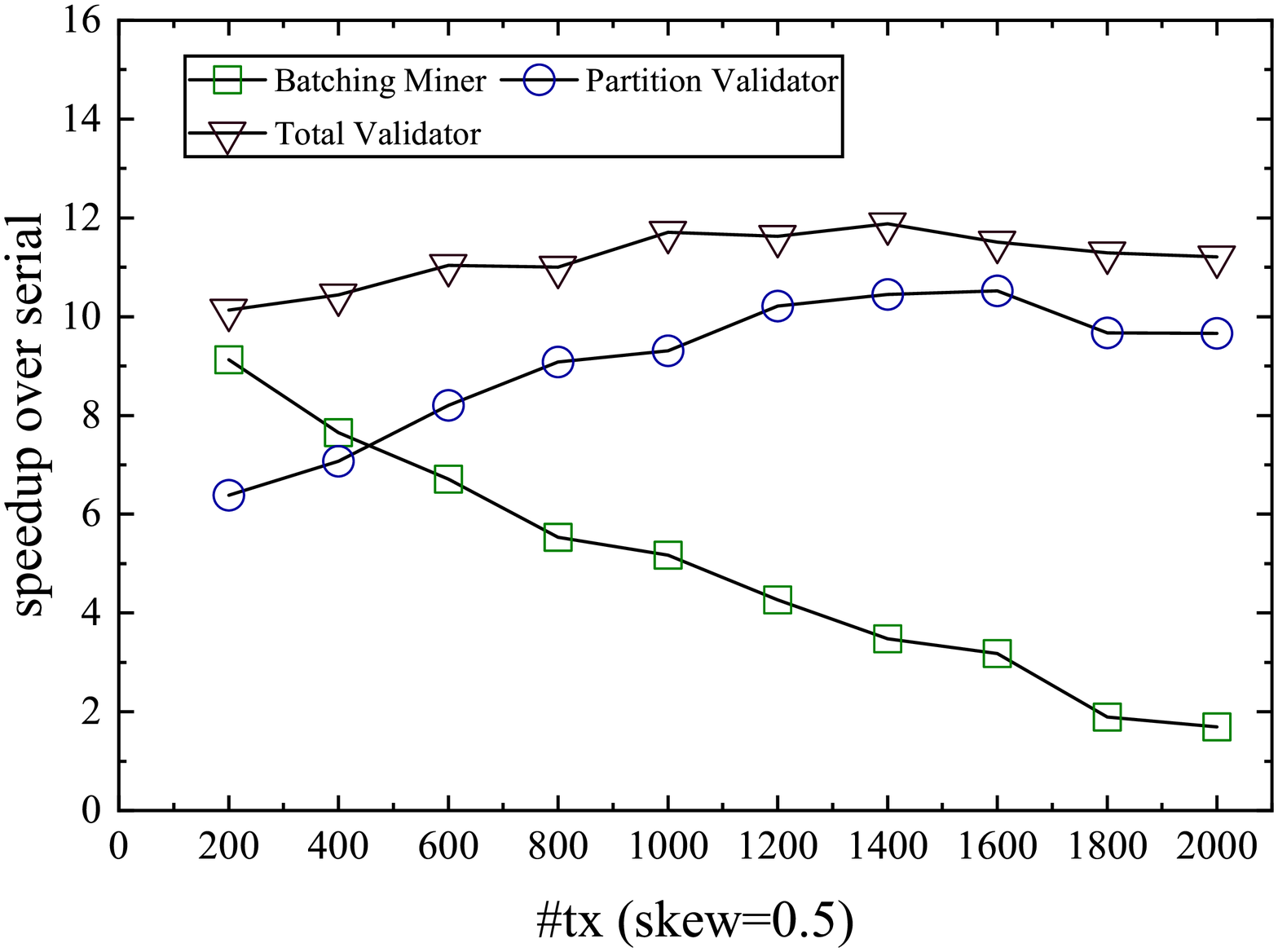}
	}
	\subfigure[$skew = 0.7$]{
		\label{img:evaluation:three:c}
		\includegraphics[scale=0.23]{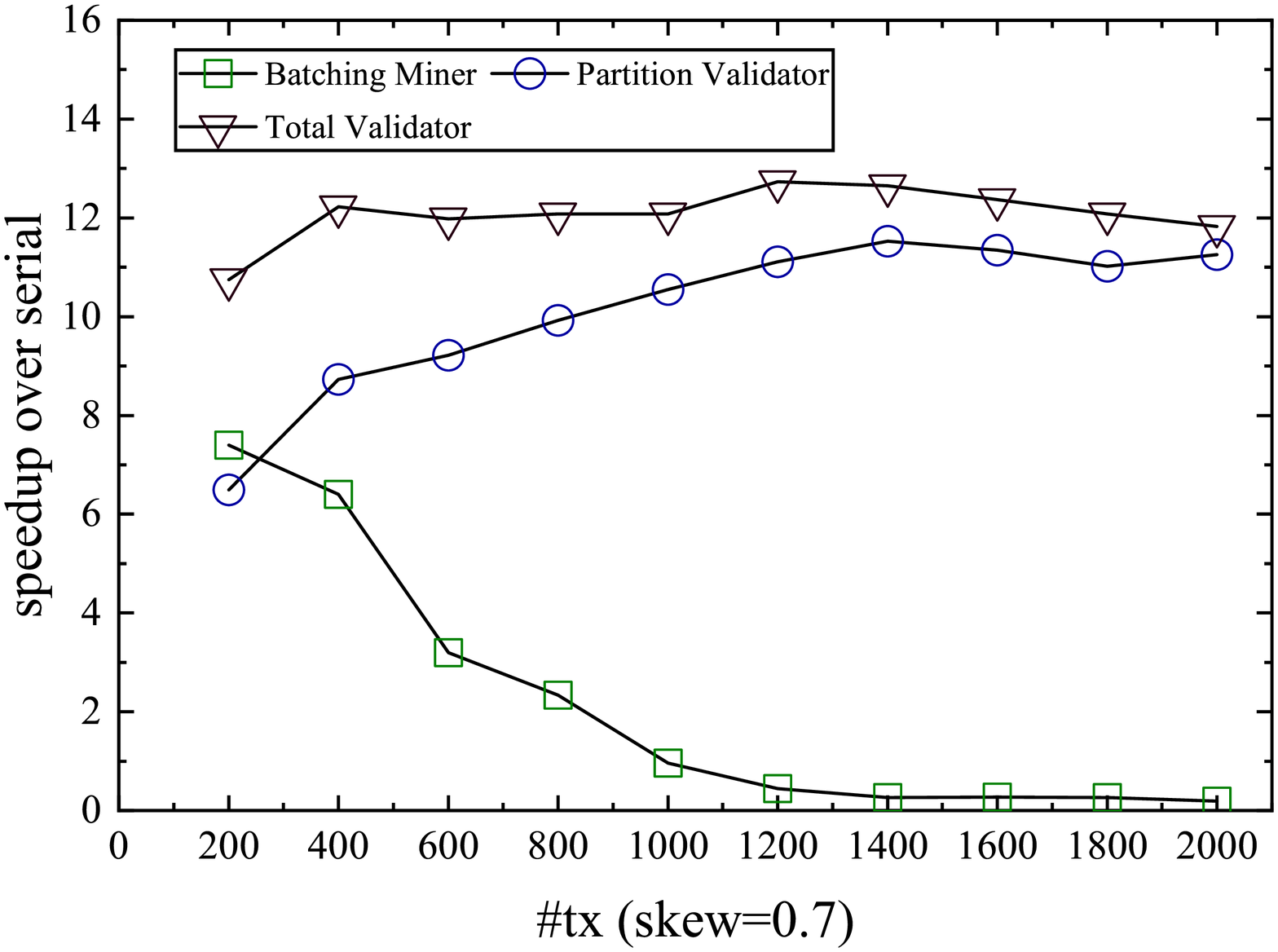}
	}
	\caption{Speedup against the number of transactions}
	\label{img:evaluation:three}
\end{figure*}
%
%

\stitle{Varying the number of threads.}
We evaluate the performance by varying the number of threads from 1 to 32 with a fixed transaction count (400 transactions per block). We evaluate our concurrent scheme and the fork-join validator under three different skew parameters of Zipfian distribution. In reality, conflict is rare because each block contains invocations of different unrelated smart contracts. Figure \ref{img:evaluation:one} shows the speedup over serial execution against the number of threads. Batching Miner refers to the batching protocol used in mining phase. Partition Validator indicates our proposed DeOCC protocol with $\tau$-constrained graph partition while Total Validator means DeOCC without partition. Batching Miner execution achieves approximate 10x speedup when the skew parameter is set to 0.1 (Figure \ref{img:evaluation:one:a}) and 9x speedup even when the workload contains medium rate conflict (skew=0.5). However, the speedup of miner drops to 2x when data skew increases to 0.7, because high data contention will result in too many aborts which are infrequent in blockchain applications. Partition Validator shows a good boost on performance which gains 10x speedup on average. Total Validator has the highest concurrency degree which achieves a maximum of 14x speedup as shown in Figure \ref{img:evaluation:one:b}.

The speedup of Partition Validator in Figure \ref{img:evaluation:one} follows the similar trend which indicates the protocol is resistant for the change of conflict rate. 
Result in Figure \ref{img:evaluation:one:a} shows that the speedup of the fork-join method is below 4. The performance of fork-join method keeps falling down when data skew increases. In all three cases, both batching miner and partition validator outperform the fork-join method.

\stitle{Varying the number of accounts.}
We evaluate the influence on speedup brought by different accounts with a fixed number of threads and 400 transactions per block. We find that the performance of batching miner firstly goes up and finally approaches the speedup of Total Validator (12.5x) when the number of accounts increases, because the conflict varies inversely with the number of accessing accounts. Partition validator performs a speedup of 11x steadily regardless of the skew parameter. The results shown in Figure \ref{img:evaluation:two} also bear out our conclusion that partition validator is rarely affected by the adjustment of data skew. 

\stitle{Varying the number of transactions in one block.}
Our implementation is evaluated under blocks containing between 100 to 2,000 transactions with fixed 16 threads and 1,000 accounts. Figure \ref{img:evaluation:three} shows the speedup against the number of transactions per block. Using batching OCC in miner phase improves the execution speed up to 10x when the number of transactions is lower than 400. In general, the throughput of batching miner decreases when more transactions are included in a block. If data contention is extremely high (e.g., data skew is set to 0.7), the number of aborted transactions in one batch increases markedly. Batching and reordering make the performance worse than serial execution when there are more than 900 transactions, because the conflict graph becomes denser during validation. Partition validator only obtains half of its maximal speedup when each block contains fewer transactions (e.g., 100 transactions). This is because it's more difficult to dispatch tasks evenly to all cores to utilize CPU resources efficiently when transactions are few.

\stitle{Impact of the workload threshold $\tau$ on performance.}
The last two groups of experiments perform a detailed analysis of how much impact on performance and communication cost threshold $\tau$ has. Experiment four varies $\tau$ from 0.0035 to 0.056 with 16 threads running. Figure \ref{img:evaluation:four} shows the speedup against different workload threshold. When threshold $\tau$ goes up, the speedup slightly drops. As reported in Section \ref{sec:analysis}, larger $\tau$ causes more workload being executed serially. When $\tau < 0.02$, the speedup of partition validator and total validator are well-matched, which means our partition algorithm can remain the parallelism of validators as much as possible. According to our cost analysis in Section \ref{sec:analysis}, the performance will be close to the case where we provide a consistent read set for each vertex. 

\begin{figure*}[htbp]
	\centering
	\subfigure[Low contention.]{
		\label{img:evaluation:four:a}
		\includegraphics[scale=0.23]{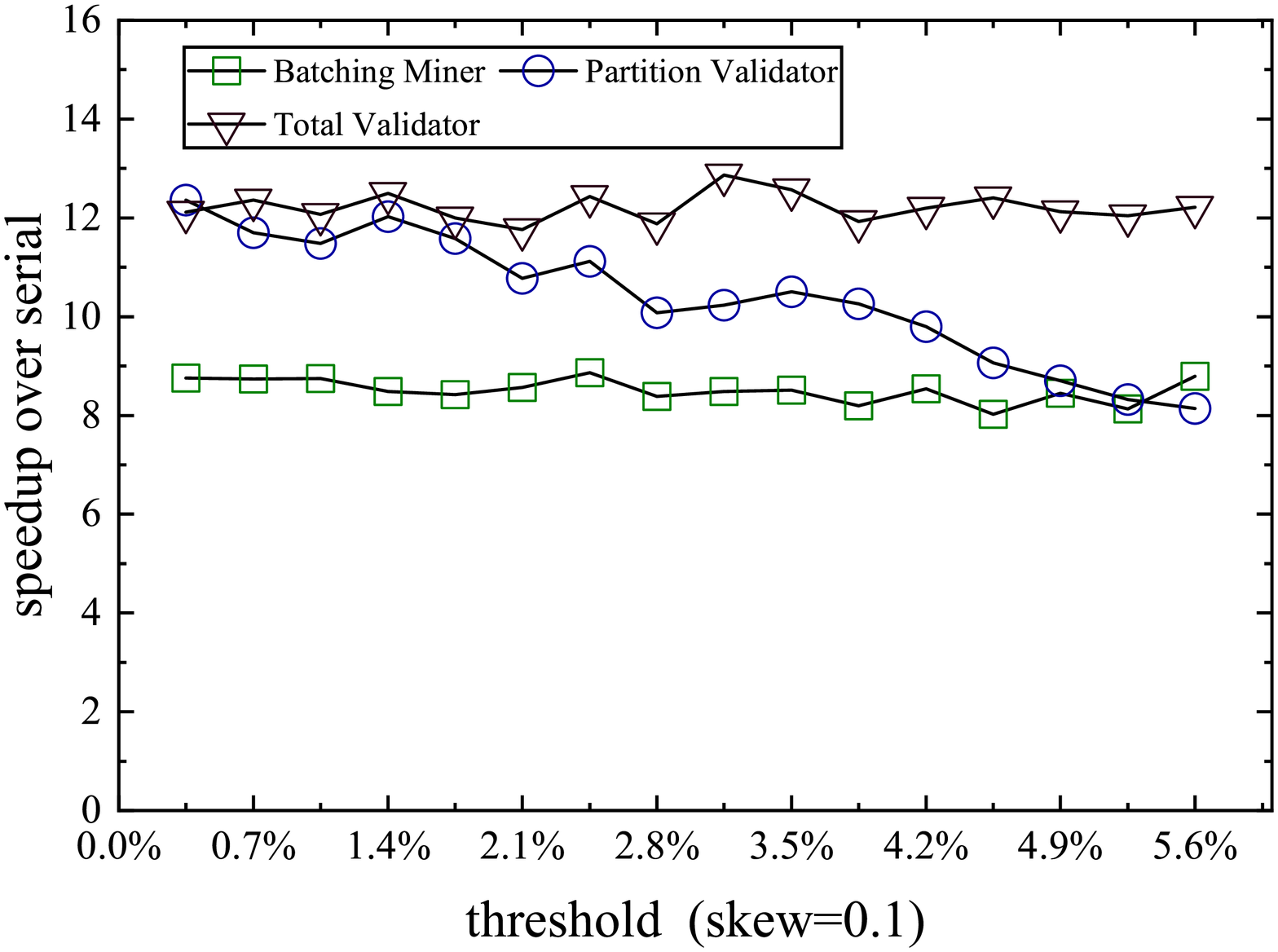}
	}
	\subfigure[Medium contention.]{
		\label{img:evaluation:four:b}
		\includegraphics[scale=0.23]{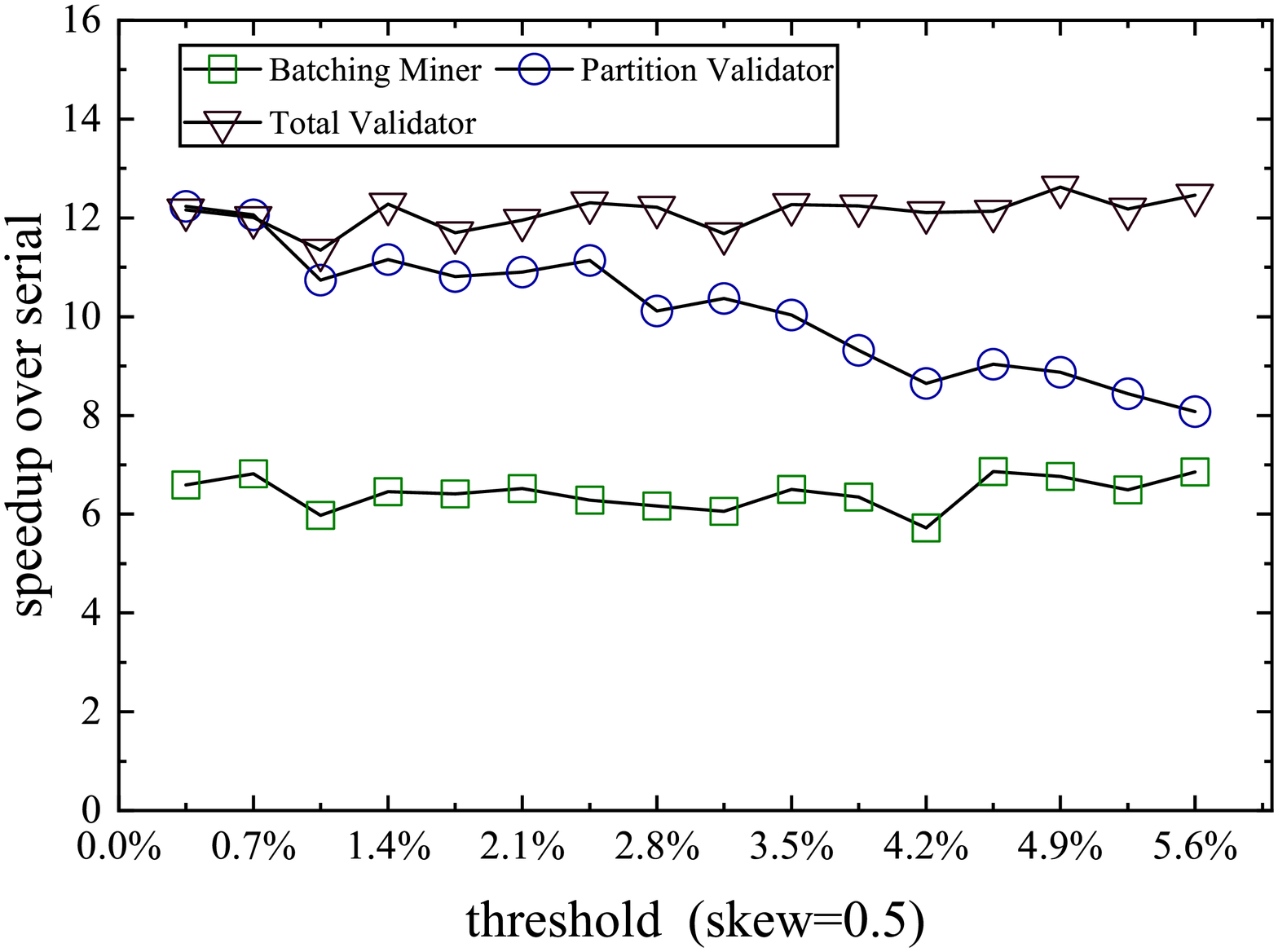}
	}
	\subfigure[High contention.]{
		\label{img:evaluation:four:c}
		\includegraphics[scale=0.23]{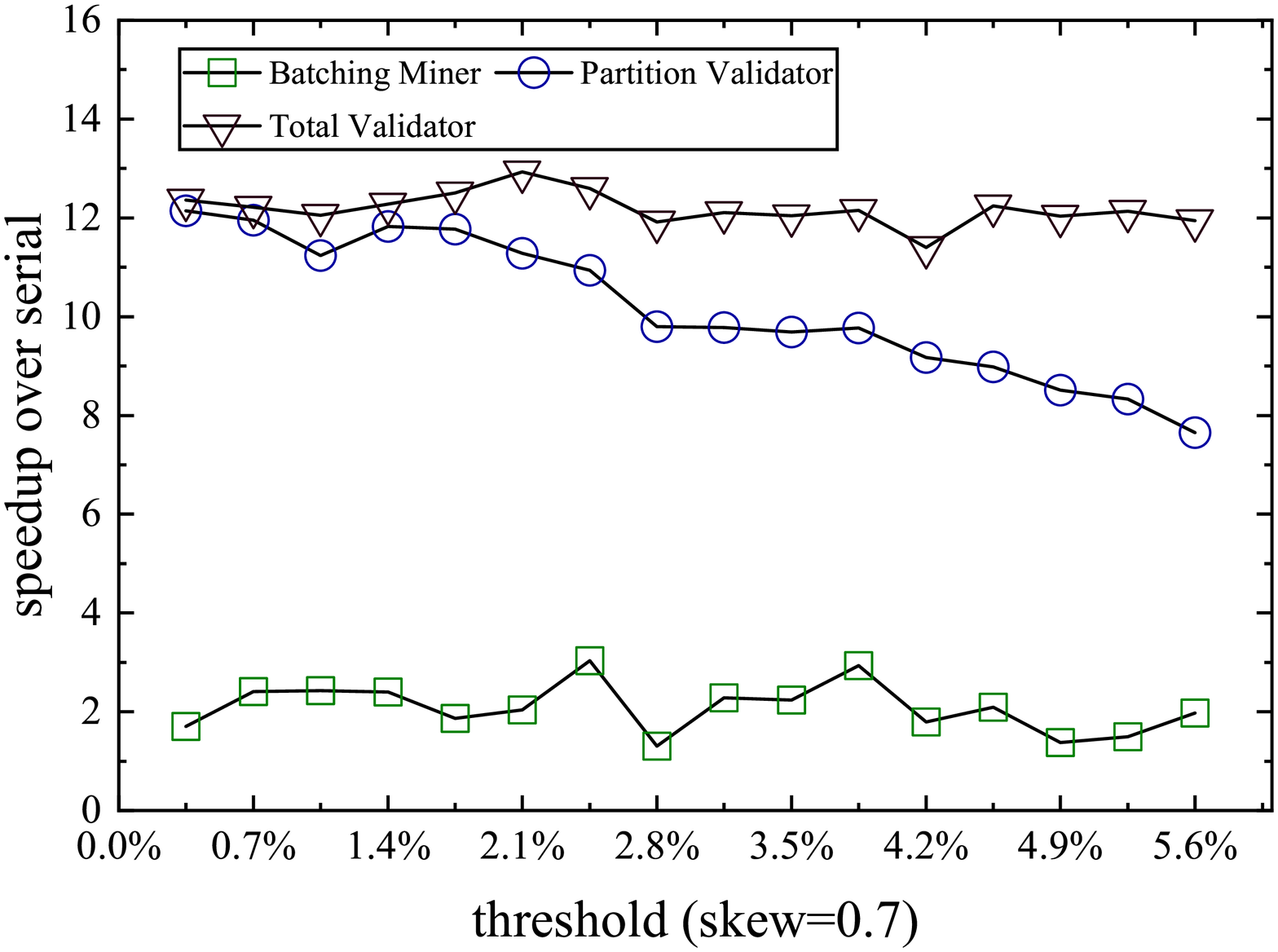}
	}
	\caption{Speedup against the workload threshold ($\tau$)}
	\label{img:evaluation:four}
\end{figure*}

\begin{figure*}[htbp]
	\centering
	\vspace{-0.1cm}  
	\setlength{\abovecaptionskip}{0.1cm}   
	\setlength{\belowcaptionskip}{-0.15cm}   
	\subfigure[$skew= 0.1$]{
		\label{img:evaluation:five:a}
		\includegraphics[scale=0.22]{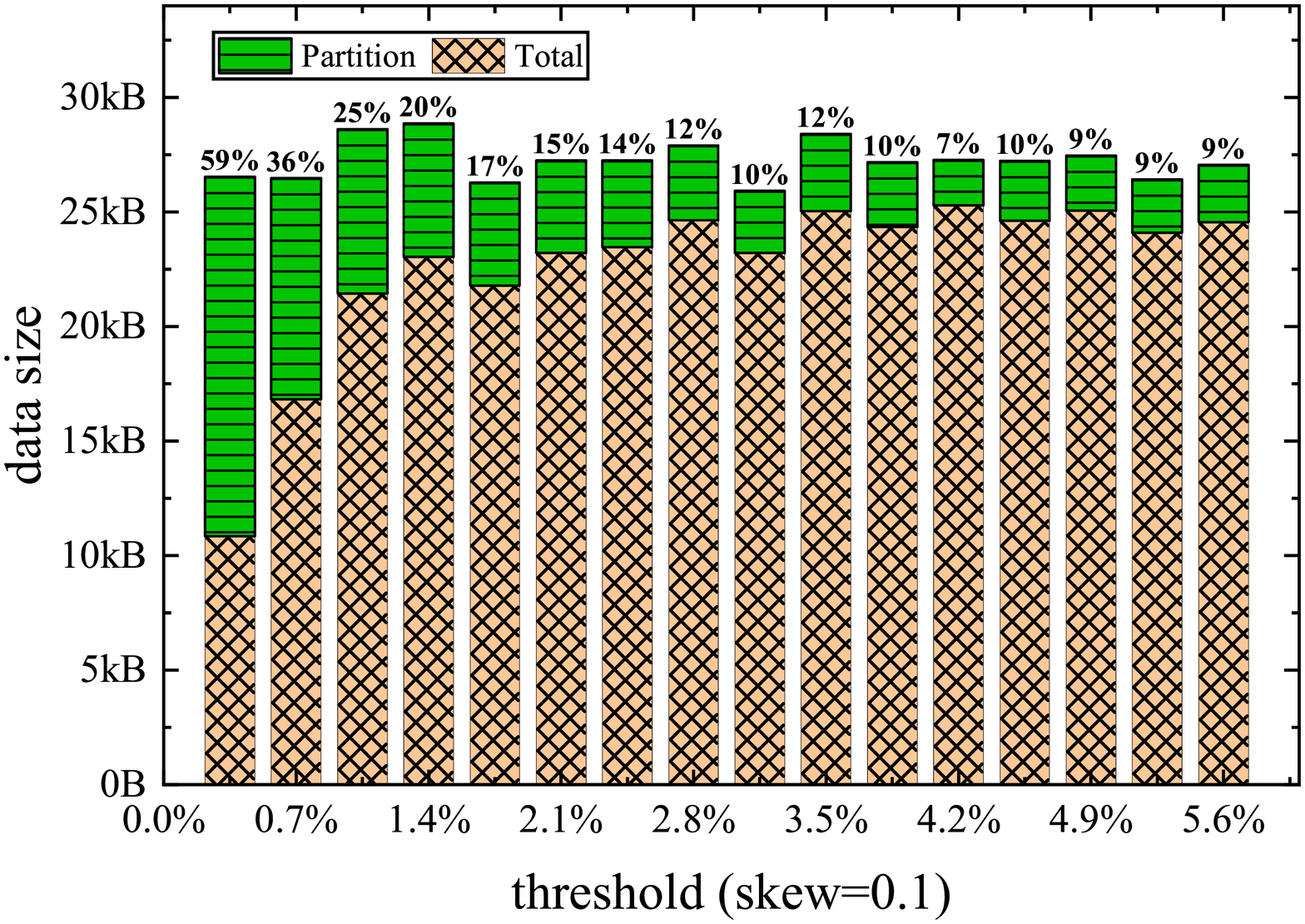}
	}
	\subfigure[$skew= 0.5$]{
		\label{img:evaluation:five:b}
		\includegraphics[scale=0.22]{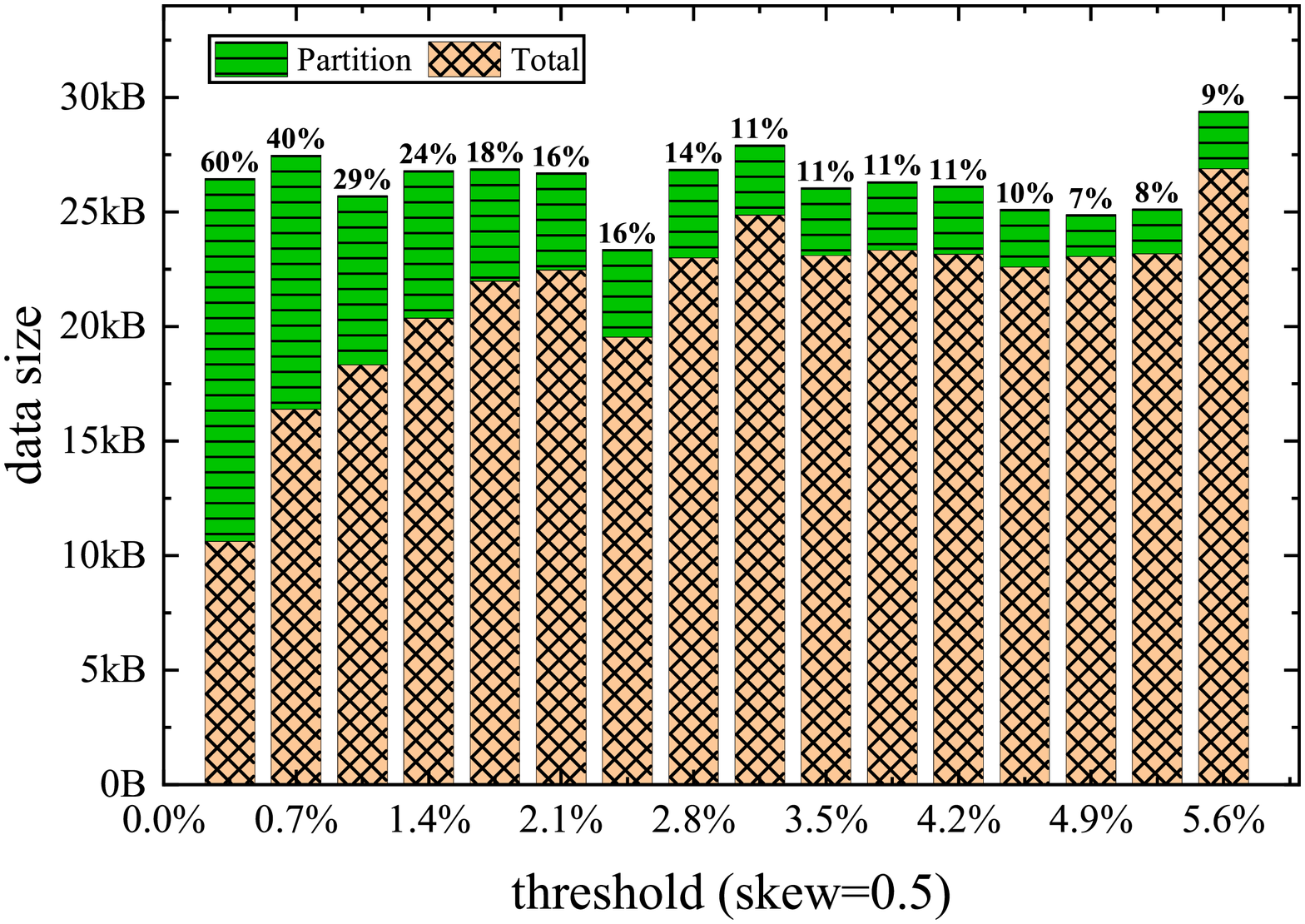}
	}
	\subfigure[$skew=0.7$]{
		\label{img:evaluation:five:c}
		\includegraphics[scale=0.22]{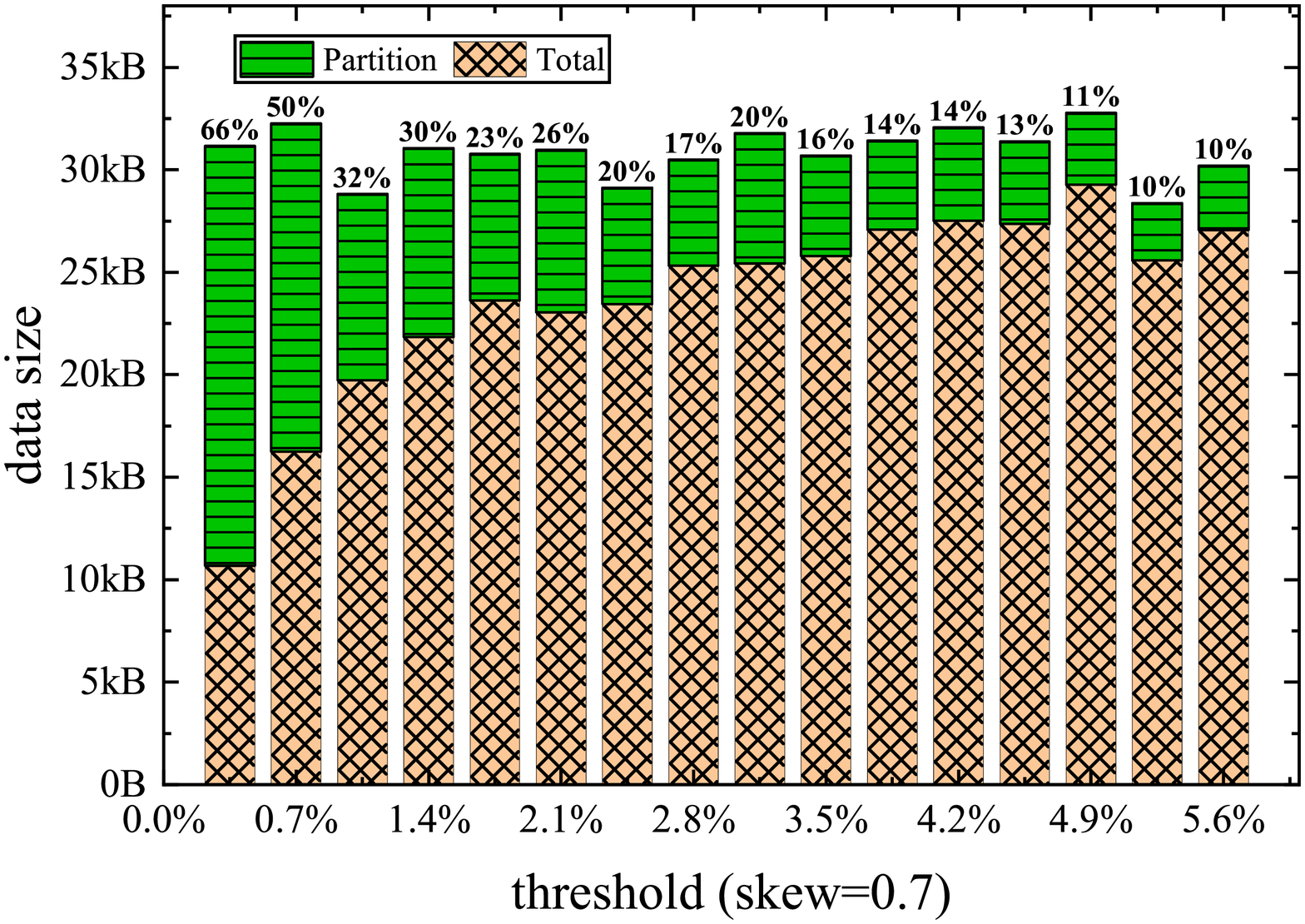}
	}
	\caption{Communication cost with different $\tau$}
	\label{img:evaluation:five}
\end{figure*}

\stitle{The saving of communication cost with varied $\tau$.}
The last experiment evaluate the saving of communication cost with $\tau$ varying from 0.0035 to 0.056. In Figure \ref{img:evaluation:five}, as data skew increases, communication overhead also rises due to more inter-conflicts. In all three cases, our partition algorithm significantly reduces the transferred size of consistent read set when $\tau > 0.015$. Even in the worst case where skew is set to 0.7, the reduction of communication cost is about 85\%.

Overall, our approach achieves an average 9x speedup for the miner and 11x for concurrent validators. And we reduce approximate 90\% communication overhead by sacrificing about 21\% performance.
%
\section{Conclusion}\label{sec:conclusion}


In this study, we present an efficient two-phase execution framework to add concurrency to smart contracts aiming at higher parallelism at both miner and validators.
More specifically, this framework includes three aspects: First, we propose an efficient variant of OCC protocol combined with batching feature of blockchain systems in the mining phase, where a greedy algorithm is devised to solve the FVS problem on the conflict graph. 
Second, we design an appropriate granularity of schedule log, along with a practical partition method to reduce the communication overhead and remain high concurrent degree in the validator phase. Third, we bring up a concurrent scheme, named DeOCC, to deterministically and efficiently replay the same schedule discovered by the miner. The evaluation shows that our two-phase execution framework can achieve approximate 11x speedup both for the miner and validators and outperform state-of-art solutions significantly. Moreover, the communication overhead drops sharply after applying our graph partition algorithm.

There are two possible pieces of future work. The first one is to explore adaptive concurrency control for transactions in blockchain system, which can dynamically fit all varied workloads. The second one is to search for solutions of concurrent execution in TEE (Trusted Execution Environment) represented by SGX (Intel Software Guard Extensions). 
	
	\bibliographystyle{IEEEtran}
	\bibliography{mybiblio}

\begin{thebibliography}{10}
\providecommand{\url}[1]{#1}
\csname url@samestyle\endcsname
\providecommand{\newblock}{\relax}
\providecommand{\bibinfo}[2]{#2}
\providecommand{\BIBentrySTDinterwordspacing}{\spaceskip=0pt\relax}
\providecommand{\BIBentryALTinterwordstretchfactor}{4}
\providecommand{\BIBentryALTinterwordspacing}{\spaceskip=\fontdimen2\font plus
\BIBentryALTinterwordstretchfactor\fontdimen3\font minus
  \fontdimen4\font\relax}
\providecommand{\BIBforeignlanguage}[2]{{%
\expandafter\ifx\csname l@#1\endcsname\relax
\typeout{** WARNING: IEEEtran.bst: No hyphenation pattern has been}%
\typeout{** loaded for the language `#1'. Using the pattern for}%
\typeout{** the default language instead.}%
\else
\language=\csname l@#1\endcsname
\fi
#2}}
\providecommand{\BIBdecl}{\relax}
\BIBdecl

\bibitem{ethereum}
``Ethereum,'' \url{https://github.com/ethereum}, 2014.

\bibitem{hyperledgerfabric}
``Hyperledger fabric,'' \url{https://github.com/hyperledger/fabric}, 2015.

\bibitem{szabo1997formalizing}
N.~Szabo, ``Formalizing and securing relationships on public networks,''
  \emph{First Monday}, vol.~2, no.~9, 1997.

\bibitem{dinh2018untangling}
T.~T.~A. Dinh, R.~Liu, M.~Zhang, G.~Chen, B.~C. Ooi, and J.~Wang, ``Untangling
  blockchain: A data processing view of blockchain systems,'' \emph{IEEE
  Transactions on Knowledge and Data Engineering}, vol.~30, no.~7, pp.
  1366--1385, 2018.

\bibitem{anjana2018efficient}
P.~S. Anjana, S.~Kumari, S.~Peri, S.~Rathor, and A.~Somani, ``An efficient
  framework for concurrent execution of smart contracts,'' \emph{arXiv preprint
  arXiv:1809.01326}, 2018.

\bibitem{dickerson2017adding}
T.~Dickerson, P.~Gazzillo, M.~Herlihy, and E.~Koskinen, ``Adding concurrency to
  smart contracts,'' in \emph{Proceedings of the ACM Symposium on Principles of
  Distributed Computing}.\hskip 1em plus 0.5em minus 0.4em\relax ACM, 2017, pp.
  303--312.

\bibitem{zhang2018enabling}
A.~Zhang and K.~Zhang, ``Enabling concurrency on smart contracts using
  multiversion ordering,'' in \emph{APWeb and WAIM Joint International
  Conference on Web and Big Data}.\hskip 1em plus 0.5em minus 0.4em\relax
  Springer, 2018, pp. 425--439.

\bibitem{lea2000java}
D.~Lea, ``A java fork/join framework,'' in \emph{Java Grande}, 2000, pp.
  36--43.

\bibitem{eswaran1976notions}
K.~P. Eswaran, J.~N. Gray, R.~A. Lorie, and I.~L. Traiger, ``The notions of
  consistency and predicate locks in a database system,'' \emph{Communications
  of the ACM}, vol.~19, no.~11, pp. 624--633, 1976.

\bibitem{kung1981optimistic}
H.-T. Kung and J.~T. Robinson, ``On optimistic methods for concurrency
  control,'' \emph{ACM Transactions on Database Systems (TODS)}, vol.~6, no.~2,
  pp. 213--226, 1981.

\bibitem{wang2016mostly}
T.~Wang and H.~Kimura, ``Mostly-optimistic concurrency control for highly
  contended dynamic workloads on a thousand cores,'' \emph{Proceedings of the
  VLDB Endowment}, vol.~10, no.~2, pp. 49--60, 2016.

\bibitem{sergey2017concurrent}
I.~Sergey and A.~Hobor, ``A concurrent perspective on smart contracts,'' in
  \emph{International Conference on Financial Cryptography and Data
  Security}.\hskip 1em plus 0.5em minus 0.4em\relax Springer, 2017, pp.
  478--493.

\bibitem{ding2018improving}
B.~Ding, L.~Kot, and J.~Gehrke, ``Improving optimistic concurrency control
  through transaction batching and operation reordering,'' \emph{Proceedings of
  the VLDB Endowment}, vol.~12, no.~2, pp. 169--182, 2018.

\bibitem{santos2012tuning}
N.~Santos and A.~Schiper, ``Tuning paxos for high-throughput with batching and
  pipelining,'' in \emph{International Conference on Distributed Computing and
  Networking}.\hskip 1em plus 0.5em minus 0.4em\relax Springer, 2012, pp.
  153--167.

\bibitem{bocchino2009parallel}
R.~L. Bocchino~Jr, V.~S. Adve, S.~V. Adve, and M.~Snir, ``Parallel programming
  must be deterministic by default,'' in \emph{Proceedings of the First USENIX
  conference on Hot topics in parallelism}, 2009, pp. 4--4.

\bibitem{vale2016pot}
T.~M. Vale, J.~A. Silva, R.~J. Dias, and J.~M. Louren{\c{c}}o, ``Pot:
  Deterministic transactional execution,'' \emph{ACM Transactions on
  Architecture and Code Optimization (TACO)}, vol.~13, no.~4, p.~52, 2016.

\bibitem{adya1999weak}
A.~Adya, ``Weak consistency: a generalized theory and optimistic
  implementations for distributed transactions,'' 1999.

\bibitem{cahill2008serializable}
M.~J. Cahill, U.~R{\"o}hm, and A.~D. Fekete, ``Serializable isolation for
  snapshot databases,'' in \emph{Proceedings of the 2008 ACM SIGMOD
  international conference on Management of data}.\hskip 1em plus 0.5em minus
  0.4em\relax ACM, 2008, pp. 729--738.

\bibitem{chen2008improved}
J.~Chen, F.~V. Fomin, Y.~Liu, S.~Lu, and Y.~Villanger, ``Improved algorithms
  for feedback vertex set problems,'' \emph{Journal of Computer and System
  Sciences}, vol.~74, no.~7, pp. 1188--1198, 2008.

\bibitem{kann1992approximability}
V.~Kann, ``On the approximability of np-complete optimization problems,'' Ph.D.
  dissertation, 1992.

\bibitem{karp1972reducibility}
R.~M. Karp, ``Reducibility among combinatorial problems,'' in \emph{Complexity
  of computer computations}.\hskip 1em plus 0.5em minus 0.4em\relax Springer,
  1972, pp. 85--103.

\bibitem{sharir1981strong}
M.~Sharir, ``A strong-connectivity algorithm and its applications in data flow
  analysis,'' \emph{Computers \& Mathematics with Applications}, vol.~7, no.~1,
  pp. 67--72, 1981.

\bibitem{gabow2000path}
H.~N. Gabow, ``Path-based depth-rst search for strong and biconnected
  components,'' \emph{Information Processing Letters}, 2000.

\bibitem{Cahill2008SerializableIF}
M.~J. Cahill, U.~R{\"o}hm, and A.~Fekete, ``Serializable isolation for snapshot
  databases,'' in \emph{SIGMOD Conference}, 2008.

\bibitem{dinh2017blockbench}
T.~T.~A. Dinh, J.~Wang, G.~Chen, R.~Liu, B.~C. Ooi, and K.-L. Tan,
  ``Blockbench: A framework for analyzing private blockchains,'' in
  \emph{Proceedings of the 2017 ACM International Conference on Management of
  Data}.\hskip 1em plus 0.5em minus 0.4em\relax ACM, 2017, pp. 1085--1100.

\end{thebibliography}
\end{document}